\newcommand{\etal}{\textit{et al.}\ }
\newtheorem{lemma}{Lemma}
\newtheorem{corollary}[lemma]{Corollary}
\newtheorem{theorem}[lemma]{Theorem}
\newcommand{\rbox}[1]{%
    \colorlet{currentcolor}{.}%
    {\color{red}%
    \fbox{\color{currentcolor}#1}}%
}
\title{Exponential lower bounds for history-based simplex pivot rules on abstract cubes}
\author[1]{Antonis Thomas\thanks{athomas@inf.ethz.ch}}
\affil[1]{Department of Computer Science\\
	Institute of Theoretical Computer Science, ETH Z\"urich \\
	8092 Z\"urich, Switzerland}
\date{}
\begin{document}

\maketitle

\begin{abstract}
The behavior of the \emph{simplex} algorithm is a widely studied subject. 
Specifically, the question of the existence of a polynomial pivot rule for the simplex algorithm
is of major importance. Here, we give exponential lower bounds for three history-based pivot rules.
Those rules decide their next step based on memory of the past steps.
In particular, we study Zadeh's least entered rule, Johnson's least-recently basic rule 
and Cunningham's least-recently considered (or round-robin) rule.
We give \emph{exponential} lower bounds 
on Acyclic Unique Sink Orientations (AUSO) of the abstract cube, for all of these pivot rules.
For Johnson's rule our bound is the first superpolynomial one in any context; for
Zadeh's it is the first one for AUSO. Those two are our main results.
\end{abstract}

\section{Introduction} \label{sec:intro}
The existence of a polynomial time pivot rule for the simplex algorithm is a major open problem in the theory of optimization. 
Most known rules have superpolynomial lower bounds by now. For deterministic rules, in particular, it is the case that many of 
them admit exponential lower bounds. Klee and Minty with their seminal paper \cite{kleeminty}, already in 1972, 
gave an exponential lower bound for Dantzig's original pivot rule. Their construction has been heavily studied ever since 
(for example \cite{ghz98},\cite{bp}) and inspired many later lower bounds. 

In this paper, we are interested in a family of deterministic pivot rules known as history-based (or having memory). 
For those, superpolynomial lower bounds seemed to be elusive until recently. 
Arguably, the most famous history-based
rule is due to Zadeh. Known as the \emph{least entered} rule, it was described in 1980 with a technical report that was reprinted in 2009 \cite{zadeh}.
This rule keeps a history of how many times each improving variable has been used and, at every step, chooses one that minimizes this history 
(a tie-breaking rule takes care of ties). The least entered rule was specifically designed to attack constructions similar to the Klee-Minty by 
using the improving directions in a balanced way (note that in this regard, it is similar to a random walk). With a letter to Klee in the 80s, Zadeh offered a \$1000 prize
to anyone who can prove polynomial  upper or superpolynomial lower bounds for the least entered rule. 
This prize was claimed in 2011, by Friedmann \cite{Friedmann11}, with a superpolynomial lower bound on actual Linear Programs (LP). No non-trivial upper bounds are known 
for this rule.

Another interesting rule was suggested by Cunningham \cite{C79}, known as the \emph{least-recently considered} rule. It fixes an initial ordering on all
improving directions and then selects one in a round-robin fashion, starting from the last direction selected. The history here is to remember which was the last used improving direction. 
Furthermore, the \emph{least-recently basic} rule, which Cunningham attributes to Johnson, was also first discussed in the same paper \cite{C79}.
That rule selects the improving direction that left the basis least recently (in other words the direction whose opposite was selected least recently).
 For a detailed exposition on those and many other 
history-based pivot rules the interested reader should look at Aoshima \etal \cite{aoshima}. 

We provide \emph{exponential lower bounds}, by means of Acyclic Unique Sink Orientations, for all three aforementioned history-based rules.

\subparagraph*{Unique Sink Orientations} (USO) is an abstract framework that generalizes LP \cite{GS} (and other problems \cite{SW}). It was originally 
described by Stickney and Watson \cite{StiWat} and later revived by Szab\'o and Welzl \cite{SW}. Such abstract frameworks 
have received lots of attention since the discovery of the Random Facet pivot rule: Kalai \cite{kalai92} and, independently, 
Matou\v{s}ek, Sharir and Welzl~\cite{msw96} proved subexponential upper bounds for this rule on LP-type problems. It became evident that 
their analysis made use only of combinatorial properties and, thus, 
it was possible for G\"artner to extend their upper bounds in a much more abstract setting \cite{G95}.

The most well-studied such framework is that of USO (e.g. \cite{matousekcount},\cite{count},\cite{gt15}, also see below).
Intuitively, a USO is an orientation of the hypercube graph such that 
every non-empty face has a unique sink (vertex with only incoming edges). The computational problem 
is to discover the unique global sink 
by performing vertex evaluations (each one reveals the orientation of the edges incident to the vertex).
Commonly, acyclic USO (AUSO) constructions have served as lower bounds for pivot algorithms 
(e.g. \cite{matouseklbrandomfacet}, \cite{ss05}, \cite{ms06}, \cite{hz16})
and our lower bounds are also manifested as AUSO.

\subparagraph{Prior work and open questions.} 
Aoshima \etal \cite{aoshima} explore the possibility that there exist AUSO on which history-based pivot rules take a Hamiltonian path.
They prove, with the help of computers, 
that Zadeh's pivot rule admits such Hamiltonian paths up to dimension 9 at least. On the contrary, they show that Johnson's rule (among others) 
does not admit Hamiltonian paths and, so, they ask if it admits exponential paths on AUSO.

Recently, Avis and Friedmann \cite{cunning} gave the first exponential lower bound for history-based rules. Namely, they prove an exponential 
lower bound for Cunningham's rule on binary parity games (definitions in \cite{cunning}). 
Their constructions translate immediately to linear programs and also AUSO, 
for which\footnote{The exact translation of binary Parity Games to AUSO is explained in \cite{cunning}. Roughly, in their constructed binary parity game the first player
has $5n'$ assigned vertices. This translates to a cube of dimension $5n'$, where the path that the algorithm will take is 
of length $2^{n'}$. Thus, for $n$-dimensional AUSO, that 
is a lower bound of the form $\Omega(2^{n/5})$.}
the lower bound is $\Omega(2^{n/5})$.
The constructions of \cite{cunning} are very complicated 
and, thus, the authors ask if it is possible to prove exponential lower bounds for this rule, on AUSO, in a simpler manner. 

Moreover, they compare their construction to the one for Zadeh's rule \cite{Friedmann11}. The latter gives a
family of non-binary parity games (which correspond to linear programs), where Zadeh's rule takes a subexponential
number of steps, of the form $2^{\Omega(\sqrt{n})}$ (where $n$ is the number of variables of the LP). 
Although binary parity games correspond directly to AUSO, the same is not known for non-binary ones.
Hence, Avis and Friedmann ask \cite{cunning} if superpolynomial lower bounds for Zadeh's rule exist also on AUSO.
In addition, Friedmann's lower bound \cite{Friedmann11} is based on a
tie-breaking rule which is \emph{artificial} in the sense that it always works in favor of the lower bound designer. It is 
not described in the paper because, as the author writes, it is ``not a natural one''.
Thus, he raises the question \cite{Friedmann11} of whether it is possible to obtain a lower bound with a \emph{natural} tie-breaking rule.

Finally, Avis and Friedmann write \cite{cunning}: 
``More generally it is of interest to determine whether all of the history based rules mentioned in \cite{aoshima} have exponential behaviour on AUSO''.

\subparagraph{Our results.}  In this paper, we report three lower bounds for the aforementioned rules, on AUSO. 
With Theorem~\ref{thm:johnson}, we give an exponential lower bound for Johnson's rule. 
To the best of our knowledge, this is the \emph{first} superpolynomial lower bound for this algorithm in any context.
Moreover, we give an exponential lower bound for Zadeh's rule, with Theorem~\ref{thm:zadeh}.
This has a number of advantages compared to the known construction: 
Firstly, it is exponential, whereas Friedmann's lower bounds \cite{Friedmann11} are subexponential
(also, those do \emph{not} translate to AUSO).
Secondly, our constructions are much simpler to describe. Finally, it is based on a tie-breaking rule 
that is essentially as \emph{simple} as possible: a fixed ordered list. 
These two lower bounds constitute the main results of this paper.

With Theorem~\ref{thm:cunn}, we give an exponential lower bound for Cunningham's rule.
The advantage here is that the construction is significantly simpler (our proof can be fully described in a couple of pages, whereas the construction from 
\cite{cunning} takes over 30); the lower bound also happens to be slightly improved.
Theorem~\ref{thm:cunn} serves as a warm-up to the main results by introducing the techniques and notation we use for our constructions.

Therefore, we answer to the positive all the questions described in the previous paragraph.

\subparagraph{Our methods.} 
The constructions in this paper are based on the building tools originally presented by Schurr and Szab\'o \cite{ss04}; we do, however, introduce some
novel ideas needed to deal with history-based pivot rules.
Most known inductive lower bound constructions (e.g. \cite{ss04},\cite{ss05},\cite{ms06},\cite{hz16})
embed copies of the previous construction into the next one, in such a way that the algorithm gets trapped in the previous construction twice.
For Zadeh's rule this does not work: it balances the
 directions being used
and it inevitably escapes the second trap (at the next inductive step).
To overcome this, we build a trap that consists of a small number of copies, being connected in a 
careful way which ensures that the algorithm uses the improving directions in a \emph{balanced} fashion: it 
follows the path of the previous construction, up to making additional ``balancing moves'' between different copies.

\subparagraph{Lower bounds on AUSO.}
It is not clear if our constructions can be realized as LPs. However, the abstract setting allows for simpler proofs that are easy to communicate. 
We, thus, believe 
that such constructions are relevant for understanding the behavior of the pivot rules we study
and the ideas could be used for the design of LP-based exponential lower bounds. 

For example, 
the first subexponential lower bounds for Random Facet \cite{matouseklbrandomfacet}
(tight to the upper bound; see also \cite{G02}) and for Random Edge \cite{ms06} (at every step chooses one improving direction at random) were 
both proved by AUSO constructions. 
Indeed, for these two rules, subexponential lower bounds have been later proved on actual LP \cite{fhz11}. 
The most recent lower bound on AUSO was by Hansen and Zwick in 2016  \cite{hz16}, where they improve the subexponential
lower bound for Random Edge.
Note that for this rule non-trivial exponential upper bounds are known 
in the general case \cite{hpz14} and under assumptions \cite{gt16}.

%

\vspace{-0.1cm}
\section{Preliminaries}  \label{sec:prel}
Let $[n] = \{1, \ldots, n\}$ and $\pm [n] = \{-n, \ldots, -1, 1, \ldots, n\}$. 
Let $Q^{[n]} = 2^{[n]}$ be the set of vertices of the $n$-dimensional hypercube over coordinates in $[n]$.
Often we write $Q^n$ (the superscript indicates the dimension).
A vertex of the hypercube $v \in Q^n$ is denoted by the set of coordinates it contains. 
Generally, with $C \subseteq [n]$ we denote a set of coordinates.

Consider two vertices $v,u \in Q^n$. With $v \oplus u$ we denote the symmetric difference of the two sets.
Now, let $C \subseteq 2^{[n]}$ and $v\in Q^n$. A \emph{face} of the hypercube, $F(C,v)$, is defined as the set of vertices 
that are reached from $v$ over the coordinates defined by any subset of $C$,
i.e. $F(C, v) = \{u\in Q^n | v\oplus u \subseteq  C\}$. The dimension of the face is $|C|$. 
We call edges the faces of dimension 1, e.g. $F(\{j\}, v)$. 
For $k\leq n$ we call a face of dimension $k$ a $k$-face.

Let $\psi$ denote an orientation of the edges of the hypercube $Q^n$. 
Consider two vertices $v,u \in Q^n$ and a coordinate $j \in [n]$. The notation $v \xrightarrow{j} u$ (w.r.t $\psi$) means that $F(\{j\}, v) = \{v,u\}$ and 
that the corresponding edge is oriented from $v$ to $u$ in $\psi$.
Sometimes we write $v \rightarrow u$, when the coordinate is irrelevant. An edge $v \xrightarrow{j} u$ 
is forward if $j \in u$ and otherwise we say it is backward. We use $v\rightsquigarrow w$
to denote (that there is) a directed path from $v$ to $w$.

We can now define the concept of \emph{direction}; the algorithms that we study in this paper
have memory of the directions that have been used so far. A direction is a signed coordinate. 
Let $c \in C$ be a coordinate; two different directions correspond to $c$, $+c$ and $-c$. 
At a vertex $v$ the direction $+c$ corresponds to a forward edge incident to $v$ and $-c$ to a backward edge.
We say that a direction is \emph{available} at vertex $v$ if the corresponding edge is outgoing.
Thus, at each vertex if a coordinate is incoming then none of the directions is available and if a coordinate is outgoing
then exactly one of the directions is available. 
Similarly to above, we write $v \xrightarrow{d} u$, for some direction $d$.
Note that if we have $v \xrightarrow{+c} v'$  (similarly $-c$) 
then at $v'$ neither $+c$ nor $-c$ can be available.
Generally, we denote with $D \subseteq \pm [n]$ a set of directions. Given a set of coordinates $C$, we say
that $D$ is the set of directions that corresponds to $C$ to mean $D = \{-c, +c \mid c \in C\}$.
Often, we use $d$ to denote a direction without specifying its sign.

Then, $\psi$ is a \emph{Unique Sink Orientation} (USO) of $Q^n$ when every non-empty face has a unique sink. USO can be either cyclic 
or acyclic (for the latter we write AUSO). $n$-(A)USO means (A)USO over $Q^n$. For a USO $\psi$,
we define $s_\psi$,
the outmap function, in the spirit of \cite{SW}. 
For every $v \in Q^n$, \mbox{$s_\psi(v) = \{ j \in [n] | v \xrightarrow{j} (v \oplus \{j\} )\}$}, that is the set of coordinates on which $v$ 
has an outgoing edge.
A sink of a face $F(C,v)$ is a vertex $u \in F(C,v)$, such that $s_\psi(u) \cap C = \emptyset$. The whole cube is a face of itself;
thus, there is a unique vertex $v$, the \emph{global sink} with $s(v) = \emptyset$.
In the rest of the paper, we write $s(v)$ to denote the outmap of $v$ (the exact orientation $\psi$ will be clear from the context).
Moreover, let us call a USO \emph{uniform} when it is such that every edge is oriented towards the global sink. Of course, such a USO is acyclic.

The computational problem associated with a USO is to find the global sink. The computational model is 
the \emph{vertex oracle} model. We have access to an oracle such that when we give it a vertex $v$, it replies 
with the outmap $s(v)$ of $v$. This is the standard computational model in USO literature and all the lower and upper
bounds are with respect to it.

We are now ready to state the Product and Reorientation lemmas (due to \cite{ss04}) which are the building tools for the lower bound constructions in this paper.
The following constitutes the intuitive description of the Product lemma which is relevant to us:
Consider an $n$-AUSO $A$ (oriented hypercube graph) and take $2^m$ copies of $A$. For every vertex $v\in A$, take an $m$-AUSO $A_v$.
We call this the \emph{connecting frame} for $v$. 
Each copy of $A$ corresponds to a vertex of the frame. Then, each vertex $v \in A$ is connected according to $A_v$.
The result is an $(n+m)$-AUSO. Formally:

\begin{lemma}[Product \cite{ss04}] \label{lem:product}
Let $C$ be a set of coordinates, $C'\subseteq C$ and $\bar{C'} = C \setminus C'$.
Let $\tilde{s}$ be a USO outmap on $Q^{C'}$. For each vertex $u \in Q^{C'}$ we have a USO outmap $s_u$ on $Q^{\bar{C'}}$. 
Then, the orientation defined
by the outmap $s(v) = \tilde{s}(v \cap C') \cup s_{v \cap C'}(v \cap \bar{C'})$
on $Q^C$ is a USO. Furthermore, if $\tilde{s}$ and all $s_u$ are acyclic so is $s$.
\end{lemma}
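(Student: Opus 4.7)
The plan is to verify the unique-sink property face by face, and then handle acyclicity separately via a projection argument.

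First I would check that $s$ is a well-defined outmap on $Q^C$: since $\tilde{s}(v\cap C')\subseteq C'$ and $s_{v\cap C'}(v\cap \bar{C'})\subseteq \bar{C'}$, the union lies in $C$. Moreover, for any direction $j\in C'$, membership of $j$ in $s(v)$ depends only on $v\cap C'$ and is governed by $\tilde{s}$, while for $j\in\bar{C'}$, membership depends on $v\cap C'$ (which copy we are in) and on $v\cap\bar{C'}$, governed by $s_{v\cap C'}$. This observation is the workhorse for the rest of the proof.

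Next, for the USO property, pick any non-empty face $F(D,w)$ of $Q^C$ and split $D=D_1\cup D_2$ with $D_1=D\cap C'$ and $D_2=D\cap \bar{C'}$. Project to $C'$: the face $\tilde F=F(D_1,w\cap C')$ in $Q^{C'}$ has a unique sink $\tilde u$ under $\tilde s$. Now look at the ``fiber'' above $\tilde u$, namely the face $F(D_2,\tilde u\cup(w\cap\bar{C'}))$ inside the copy indexed by $\tilde u$; by assumption it has a unique sink $u^\star$ under $s_{\tilde u}$. I would then check that $u^\star$ is a sink of $F(D,w)$ by intersecting $s(u^\star)$ separately with $D_1$ and $D_2$ (both empty by construction), and conversely that any sink $v^\star$ of $F(D,w)$ must satisfy $v^\star\cap C'=\tilde u$ (otherwise $\tilde s(v^\star\cap C')\cap D_1\neq\emptyset$, contradicting uniqueness in $\tilde F$) and then $v^\star\cap\bar{C'}=u^\star\cap\bar{C'}$ by uniqueness in the fiber. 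Hence $v^\star=u^\star$.

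For acyclicity, assume $\tilde s$ and every $s_u$ is acyclic, and suppose for contradiction there is a directed cycle $v_0\to v_1\to\cdots\to v_k=v_0$ in $s$. Each step flips exactly one coordinate; by the workhorse observation, a step that flips a coordinate $j\in C'$ projects to an edge $v_i\cap C'\xrightarrow{j}v_{i+1}\cap C'$ in $\tilde s$, while a step flipping a coordinate in $\bar{C'}$ keeps $v_i\cap C'$ constant. Compressing the constant steps yields a closed directed walk in $\tilde s$; since $\tilde s$ is acyclic this walk must be empty, so every $v_i$ lies in the same fiber $\{v:v\cap C'=u\}$ for some fixed $u$, and the cycle becomes a directed cycle in $s_u$, contradicting its acyclicity.

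The main obstacle is purely bookkeeping: keeping the decomposition $D=D_1\cup D_2$ consistent and confirming that the outmap truly restricts to $\tilde s$ on $C'$-subfaces and to $s_u$ on $\bar{C'}$-subfaces; once this is clean, the unique-sink and the projection-based acyclicity arguments fall out immediately, and no delicate combinatorial case analysis is required.
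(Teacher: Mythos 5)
The paper itself does not prove Lemma~\ref{lem:product}; it is stated with a citation to Schurr and Szab\'o \cite{ss04} and used as a black box. So there is no in-paper proof to compare against. Evaluating your argument on its own merits: it is correct and complete. The decomposition $D=D_1\cup D_2$ with $D_1=D\cap C'$, $D_2=D\cap\bar{C'}$, followed by first locating the unique sink $\tilde u$ of the projected face in $\tilde s$ and then the unique sink in the fiber over $\tilde u$ under $s_{\tilde u}$, correctly establishes existence; the converse direction (any sink of $F(D,w)$ must project to $\tilde u$ and then agree with $u^\star$ in the fiber) correctly establishes uniqueness. The acyclicity argument also goes through: projecting a hypothetical cycle to $Q^{C'}$, compressing the $\bar{C'}$-steps, and noting that a nontrivial closed walk in an acyclic digraph would contain a cycle forces the original cycle to live in a single fiber, where it contradicts acyclicity of $s_u$. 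This projection-plus-fiber argument is the standard and essentially the same route taken in \cite{ss04}.

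One small stylistic remark: when you claim the closed walk in $\tilde s$ ``must be empty,'' it is worth saying explicitly that a closed directed walk of positive length always contains a directed cycle, so acyclicity of $\tilde s$ forces the walk to be a single vertex. You use this fact implicitly; spelling it out would make the step airtight.
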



The Reorientation lemma, which follows, can be intuitively explained this way: 
if we have a USO and there is a face, such that all the vertices in this face have the same outmap on the edges external to 
the face, then we can reorient this face according to any other USO.


\begin{lemma}[Reorientation. Corollary 6 in \cite{ss04}] \label{lem:reorientation}
Let $C$ be a set of coordinates, $C'\subseteq C$ and $\bar{C'} = C \setminus C'$. Let $s$ be a USO on $Q^C$ and let 
$\mathcal{F} = F(C', u)$, for some $u\in Q^C$, be a face of $Q^C$.  If, for any two vertices $v, w \in \mathcal{F}$,
$s(v) \cap \bar{C'} = s(w) \cap \bar{C'}$ and $\tilde{s}$ is a USO on $Q^{C'}$, then the outmap 
\mbox{$s'(v) = \tilde{s}(v\cap C') \cup (s(v) \cap \bar{C'})$} for $v \in \mathcal{F}$ 
and $s'(v) = s(v)$ otherwise is a USO on $Q^C$.
\end{lemma}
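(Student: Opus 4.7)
The plan is to verify the USO property of $s'$ directly, by showing that every non-empty face $G=F(D,w)$ of $Q^C$ has a unique sink under $s'$, and I would do this by case analysis on how $G$ meets the reoriented face $\mathcal{F}$. Two cases are immediate: if $G\cap\mathcal{F}=\emptyset$, then $s'$ agrees with $s$ on $G$ and uniqueness follows from the USO property of $s$; if $G\subseteq\mathcal{F}$, then $D\subseteq C'$ (moving in any direction in $D$ must preserve membership in $\mathcal{F}$), so the orientation of $G$ under $s'$ is the one induced by $\tilde{s}$ on the projection of $G$ into $Q^{C'}$, and the USO property of $\tilde{s}$ yields a unique sink.

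The substantial case is when $\mathcal{G} := G\cap\mathcal{F}$ is a proper non-empty sub-face of $G$. Writing $D_1 = D\cap C'$ and $D_2 = D\cap\bar{C'}$, we have $D_2\neq\emptyset$, and $\mathcal{G}$ is the $|D_1|$-dimensional sub-face of $G$ in directions $D_1$. Let $T$ denote the common value of $s(v)\cap\bar{C'}$ on $\mathcal{F}$ provided by the hypothesis, and set $T_2 = T\cap D_2$. The essential local observation is that $s'(v)\cap D_2 = s(v)\cap D_2$ for every $v\in G$: this is trivial if $v\notin\mathcal{F}$, and if $v\in\mathcal{F}$ it holds because $\tilde{s}(v\cap C')\subseteq C'$ is disjoint from $D_2$ while $s(v)\cap\bar{C'}=T$. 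Consequently only the $D_1$-edges inside $\mathcal{G}$ are rewired, and the next step is to split on whether the $s$-sink $u^s$ of $G$ lies in $\mathcal{F}$.

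If $u^s\notin\mathcal{F}$, I would first argue that $T_2\neq\emptyset$; otherwise the unique $s$-sink of the sub-face $\mathcal{G}$ would have empty outmap on $D_1$ (as a $\mathcal{G}$-sink) and on $D_2$ (by $T_2=\emptyset$ together with membership in $\mathcal{F}$), hence be an $s$-sink of $G$ lying in $\mathcal{F}$, contradicting uniqueness. With $T_2\neq\emptyset$ every vertex of $\mathcal{G}$ carries an outgoing $D_2$-edge under $s'$, so $u^s$ itself, with $s'(u^s)=s(u^s)$, is the unique sink of $G$ under $s'$. If instead $u^s\in\mathcal{F}$, then $u^s\in\mathcal{G}$ and $s(u^s)\cap D=\emptyset$ forces $T_2=\emptyset$; the sink of $G$ under $s'$ is then the unique vertex $v^*\in\mathcal{G}$ whose $C'$-projection is the $\tilde{s}$-sink of the corresponding sub-face of $Q^{C'}$. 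Uniqueness against vertices outside $\mathcal{F}$ comes from $s'$ agreeing with $s$ there (combined with $u^s\in\mathcal{G}$), and uniqueness inside $\mathcal{G}$ from the USO property of $\tilde{s}$.

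The delicate step is precisely this dichotomy, namely that $u^s\in\mathcal{F}$ is equivalent to $T_2=\emptyset$. Once it is in place, the sink is explicitly located in each branch and the remaining checks reduce to applying the USO properties of $s$ and $\tilde{s}$ to the appropriate sub-faces.
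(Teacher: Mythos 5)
The paper does not prove this lemma; it imports it verbatim as Corollary~6 of Schurr and Szab\'o \cite{ss04}, so there is no in-paper proof to compare against. Your direct verification is correct and self-contained. The case split on how an arbitrary face $G=F(D,w)$ meets $\mathcal{F}$ is the right one, and all the claims check out: $\mathcal{G}=G\cap\mathcal{F}$, when non-empty, equals $F(D\cap C', v_0)$ for any $v_0\in\mathcal{G}$ and is thus itself a face, so it has a unique $s$-sink, which is what the argument in the subcase $u^s\notin\mathcal{F}$ relies on to rule out $T_2=\emptyset$. The key local observation that $s'$ and $s$ agree on the $\bar{C'}$-part of the outmap throughout $G$ (because $\tilde{s}$ contributes only inside $C'$) is exactly what confines the rewiring to $D_1$-edges of $\mathcal{G}$, and the dichotomy $u^s\in\mathcal{F}\iff T_2=\emptyset$ then locates the $s'$-sink explicitly in each branch. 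One small remark: Schurr and Szab\'o obtain this corollary by reducing to a hypersink-reorientation argument (conceptually, one ``translates'' the constant external outmap so that $\mathcal{F}$ becomes a hypersink, replaces the interior orientation, and translates back), whereas you verify the unique-sink property face by face from scratch. Your route is more elementary and arguably more transparent at the cost of being somewhat longer; the cited route is shorter once the hypersink lemma is in place.
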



\vspace{-0.1cm}
\section{A warm-up: Cunningham's Rule} \label{sec:cunn}
\begin{theorem} \label{thm:cunn}
There exist $n$-AUSO such that Cunningham's rule, with a suitable starting vertex and list, takes a path of length 
at least $2^{n/4}$.
\end{theorem}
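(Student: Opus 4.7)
The plan is to argue by induction on $k$ that there is a $(4k)$-AUSO $\psi_{4k}$ together with a starting vertex $v_n^\circ$ and a Cunningham list $\sigma_n$ such that the rule traces a path of length $L_n \ge 2^k$ from $v_n^\circ$ to the global sink. The base $k=1$ is trivial: any uniform $4$-AUSO started antipodally to its sink takes at least two pivot steps. For the inductive step I will build $\psi_{n+4}$ (with $n = 4k$) so that two full traversals of embedded copies of $\psi_n$ are chained together, yielding $L_{n+4} \ge 2 L_n + O(1) \ge 2^{k+1}$.

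Concretely, I would apply the Product lemma (Lemma~\ref{lem:product}) with frame coordinates $C' = \{n{+}1,\ldots,n{+}4\}$ and a $4$-AUSO outmap $\tilde{s}$ that is uniform with sink at a designated vertex $u_B$. At a second designated vertex $u_A$ and at $u_B$ I would place copies of $\psi_n$ (possibly with a coordinate relabeling at $u_B$ so that the frame-entry point of copy $B$ coincides with the start vertex of the inner $\psi_n$). At each of the remaining $14$ frame vertices I would place a uniform inner $n$-AUSO whose sink is located at the frame-entry point, making these ``transit'' copies cross in a single pivot step. The extended list is $\sigma_{n+4} = (\sigma_n,\, n{+}1,\, n{+}2,\, n{+}3,\, n{+}4)$, the new coordinates appended at the end, and the algorithm is started at $(v_n^\circ, u_A)$. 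The frame $\tilde{s}$ is chosen so that the directed path from $u_A$ to $u_B$ uses the new coordinates in the fixed order $n{+}1, n{+}2, n{+}3, n{+}4$. Then, after the full traversal of copy $A$ (contributing $L_n$ steps), the four-step frame corridor shepherds the algorithm through the transit copies into copy $B$ with last-used direction $n{+}4$; Cunningham's pointer cyclically returns to the head of $\sigma_n$, copy $B$ is executed in exactly the same memory state as copy $A$ was, and another $L_n$ steps follow before the global sink is reached.

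The main obstacle, and the reason four new dimensions are needed rather than one or two, is that the Product lemma forces every vertex of a copy to share the same set of outgoing new coordinates: whenever Cunningham's execution on $\psi_n$ would wrap its cyclic pointer around $[1,n]$, the algorithm in $\psi_{n+4}$ is tempted to escape the current copy through an outgoing new coordinate before the inner path is finished. To suppress this, I would invoke the Reorientation lemma (Lemma~\ref{lem:reorientation}) on the two-dimensional faces spanning pairs of new coordinates along the boundary of copies $A$ and $B$, rerouting the would-be escape edges so that at every non-sink vertex of a main copy the first outgoing coordinate in the cyclic order after the current pointer is still an inherited one. Four new dimensions provide exactly the room one needs both to stage the directed corridor from $u_A$ to $u_B$ and to perform these reorientations without breaking acyclicity; once this bookkeeping is verified, the inductive invariant $L_{n+4} \ge 2^{k+1}$ follows and the theorem is proved.
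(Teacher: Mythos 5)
Your overall plan (double one inherited traversal via a 4-coordinate frame, append the new directions at the tail of the list, target the recursion $L_{n+4}\ge 2L_n$) matches the paper's target, but the architecture you propose is essentially the transpose of the paper's and it does not survive closer inspection.

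The key problem is the one you flag yourself but do not actually resolve. In your setup the Product lemma is applied with the four \emph{new} coordinates playing the role of $C'$ and the 4-AUSO frame $\tilde s$ fixed to be uniform with sink $u_B$. This fixes the outmap on the new coordinates to be $\tilde s(u_A)$ at \emph{every} vertex of copy $A$ simultaneously, so all four new directions are available outgoing at every non-sink vertex of copy $A$. Cunningham's marker will therefore escape copy $A$ the very first time it wraps past the tail of $\sigma_n$ without finding an available inherited direction, which happens long before the inner path of $\psi_n$ is complete. The Reorientation patch you propose cannot repair this: Lemma~\ref{lem:reorientation} only allows reorienting a face whose complementary outmap is constant on that face, and in your construction a 2-face on two new coordinates contains vertices sitting in different inner copies (some $\psi_n$, some uniform $n$-AUSO), which in general do \emph{not} have a common outmap on the inherited coordinates, so the hypothesis of the lemma fails. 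Moreover, even when the lemma does apply, it changes only edges \emph{inside} the designated face; it cannot make the orientation of the new-coordinate edges at copy $A$ depend on the inner vertex, which is exactly the flexibility you need to keep the token trapped while Cunningham's pointer cycles.

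The paper uses the Product lemma in the opposite orientation: the inherited $A_i$ is the $\tilde s$ part (identical at all sixteen frame vertices), and the 4-AUSO connecting frame is the $s_u$ part, \emph{varying per inner vertex}. Because of this, the connecting frame may be chosen adaptively (an adversary argument justified by Lemma~\ref{lem:product}) as the token advances along $P_i$, which is what keeps the four new directions from being usable prematurely. The resulting dynamics are interleaved rather than serial: the token bounces between two inner copies $\fbox{1}$ and $\fbox{5}$, making one or two inherited pivots per bounce, and only after the inherited sink is reached does a single reoriented ``balancing'' face $\fbox{B}$ reset the inherited coordinates to the start vertex before the token drops into the hypersink $\fbox{H}$ and replays $P_i$. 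Your sequential picture (full traversal of $A$, four-step corridor, full traversal of $B$) cannot be realized with a fixed frame, and the Reorientation call you invoke does not provide the missing degrees of freedom, so there is a genuine gap in the inductive step.
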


In this section we will give a proof for the above theorem and introduce the general approach. 
The section starts with some general comments and definitions that are relevant to all our constructions. Firstly, they are inductive.
Let $A_{i}$ be the $i$th step of the induction. The base case is $A_0$. 
We call $C_i$ a bundle of coordinates; that is the set of coordinates that was added at the $i$th step of induction (and $C_0$ are the coordinates of the base case).
We also define $C_i^+ = \bigcup_{k=0}^i C_k$.
Then, $D_i$ denotes the set of directions that corresponds to $C_i$ and, similarly, $D_i^+$ denotes the set of 
directions that corresponds to $C_i^+$.

Let $v_0^i$ be the \emph{starting vertex} for $A_i$. Consider that there is a token, which is initially on $v_0^i$, 
and at every step moves according to the direction that the given algorithm chooses. 
The path that the token takes from $v_0^i$ to the unique sink of $A_i$ is denoted with $P_i$; its length is denoted with $|P_i|$. 
In addition, let us denote with $s_i$ the sink of $A_i$.

To construct $A_{i+1}$ from $A_i$, we take $2^m$ copies of $A_i$ and connect their vertices with $m$-dimensional connecting frames, for some constant $m$
(Lemma~\ref{lem:product}).
Afterwards, we perform one reorientation (Lemma~\ref{lem:reorientation}), to install a simple balancing gadget. 
The token starts at the starting vertex $v_0^{i+1}$ and walks on a path $P$ (in $A_{i+1}$) until it reaches a vertex that has all coordinates from $C_i^+$ incoming. This vertex corresponds 
to the sink of some copy of $A_i$. 
If we project the path $P$ to only the directions from $D_i^+$ we get exactly $P_i$.
 In the balancing gadget the token will 
be taken back to the vertex that corresponds to the starting vertex for $A_i$. 
The idea is to prove that if we project the rest of the token's path to the global sink, to only the directions from $D_i^+$, we get again $P_i$.
Thus,
$|P_{i+1}| > 2|P_i|$. Let $T(n)$ denote the length of the corresponding paths on an $n$-AUSO. The recursion we get then is 
$T(n+m) > 2T(n)$. This gives rise to exponential lower bounds of the form $2^{n/m}$. For Cunningham's and Johnson's 
rules the constant is $m=4$ and for Zadeh's $m=6$.

\subparagraph{The rule.}
Firstly, let us formally define Cunningham's least-recently considered rule. 
Consider that the algorithm runs on an $n$-AUSO.
It has an ordered list $L$ that contains all $2n$ directions; let $L[k]$ indicate the $k$th direction on the list. There is a marker $\mu$ of which direction 
was used last: if direction $L[k]$ was used at the last step then $\mu = k$. At the next step the algorithm 
will start checking the directions on the list from $L[\mu+1]$
in a cyclic order (so if it reaches $L[2n]$ it continues from $L[1]$) and it chooses the first available one.
Initially, $\mu = 2n$ so that the first direction that the algorithm checks is $L[1]$. An example of a run for this 
algorithm will be given below.


\subparagraph{The construction and relevant notations.}
Let $A_0$ be the base case and $L_0$  be the list for it. For the proof of Theorem~\ref{thm:cunn}, the base case is not very important
but, for the sake of consistency, let us make it a 4-AUSO over coordinate set $C_0 = \{c_{0}^1,c_{0}^2,c_{0}^3,c_{0}^4\}$.
Also, let $L_0 = (+c_0^1,-c_0^2,+c_0^3,-c_0^1,+c_0^4,-c_0^3,+c_0^2,-c_0^4)$. 
To construct $A_{i+1}$ from $A_i$ we take $2^4$ copies 
of $A_i$ which we connect with three different frames, in light of Lemma~\ref{lem:product}. 
The frames $F_1$ and $F_2$ are given in Figure~\ref{fig:cunnframes}; the frame 
$F_3$ is given in Figure~\ref{fig:cunn_F3}.
The new set of coordinates will be $C_{i+1} = \{c_{i+1}^1,c_{i+1}^2,c_{i+1}^3,c_{i+1}^4\}$.
How the three frames connect the copies of $A_i$ will be explained below. 

Let us define some notation in reference to the figures below.
An AUSO (in the figures of this section 4-AUSO) is given as a collection of 2-faces on the first two coordinates. All coordinates are labeled.
Each square represents a face
on coordinates $C_i^+$. All of these faces, except $\fbox{B}$, are internally oriented according to $A_i$ (correspond to copies of $A_i$). 
The numbers are indicating 
in which order the token will visit them. We refer to these faces in the text; for example, we write $\fbox{1}$ (in reference to the corresponding figure; here, in reference 
to Figure~\ref{fig:cunnframes}) to mean the face $F(C_i^+, \{c_{i+1}^2\})$.
Given a vertex $v$, we write $v \bot \fbox{1}$ to mean the vertex $v' \in \fbox{1}$ , such that $v \cap C_i^+ = v' \cap C_i^+$
(the corresponding vertex in $\fbox{1}$).
Moreover, we write $\fbox{1} \rightsquigarrow \fbox{5}$ to mean a path from a vertex in \fbox{1}  to the corresponding vertex in  \fbox{5}, using only 
directions from $D_{i+1}$.  In this case, the exact vertex will be clear from the context. 
The face $\fbox{B}$ is the one that contains the balancing gadget, which is installed
by use of Lemma~\ref{lem:reorientation}. In this construction and the one of Section~\ref{sec:johnson}, 
$\fbox{H}$ is a hypersink (has all edges external to the face incoming). In the construction of Section~\ref{sec:zadeh} there is no hypersink.
The notation of the figures that we just described will also be used in the next sections.

\begin{figure}[htbp] 
  \centering\includegraphics[width = \textwidth]{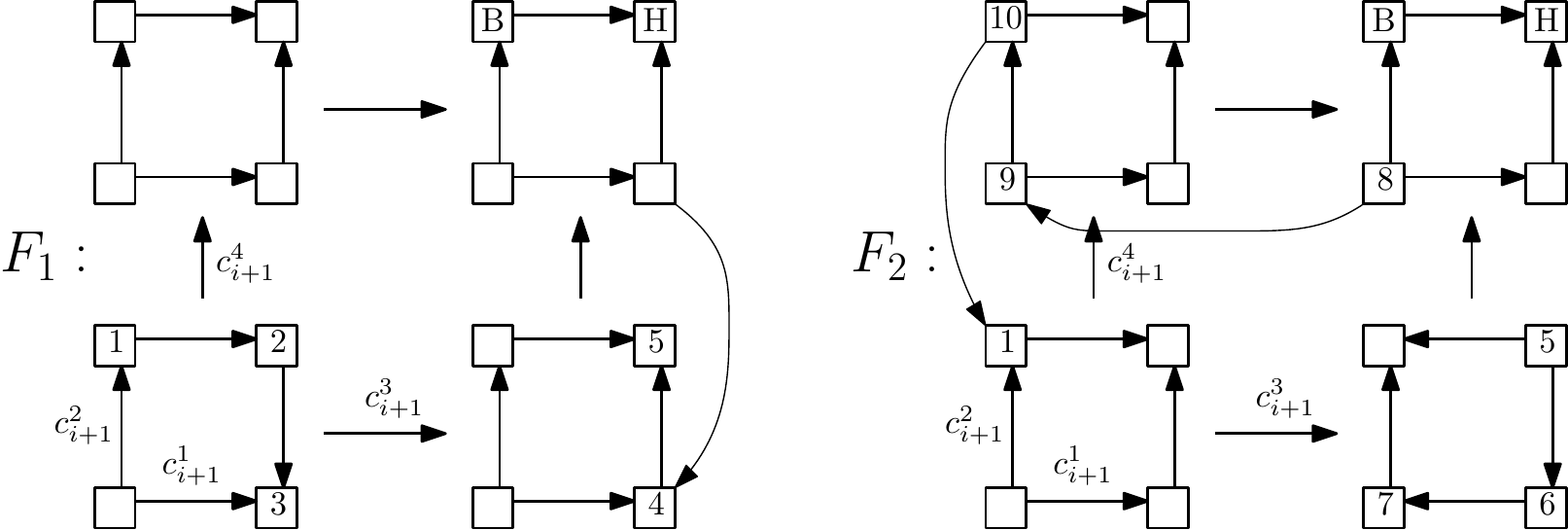}
\caption{The orientations $F_1$ and $F_2$, used as connecting frames, are given in this figure. The 4-dimensional frames are split in 2-faces of coordinates $c_{i+1}^1$ 
and $c_{i+1}^2$. 
In our depictions it is always the case that vertex $\emptyset$ is at the bottom-left whereas the full vertex that contains all the coordinates is at the top-right (\fbox{H} here).
The arrows on the other coordinates indicate the orientation of all the edges on this coordinate except when noted differently. For example, in $F_2$ all
edges on coordinate $c_{i+1}^3$ are oriented from left to right (forward) except the edge $\fbox{9} \leftarrow \fbox{8}$. This notation is valid also for the next figures.
}
\label{fig:cunnframes}
\end{figure} 

Based on Figure~\ref{fig:cunn_F3} below, we will give an example run of the algorithm. In that figure we have 
the connecting frame $F_3$; this 4-AUSO also serves as the base case $A_0$. Let the starting vertex be at $\fbox{1}$, i.e. $v_0^0 = \{c_0^2\}$.
As a reminder, we restate that $L_0 = (+c_0^1,-c_0^2,+c_0^3,-c_0^1,+c_0^4,-c_0^3,+c_0^2,-c_0^4)$.
Then, the algorithm will use $+c_0^1,+c_0^3$ and go to $\fbox{5}$ (notice that $-c_0^2$ was not available). From $\fbox{5}$ it will use $-c_0^1,+c_0^4$ and go to $\fbox{B}$.
The rest of the directions in the cyclic order will not be available there. 
Finally, the algorithm will use $+c_0^1$ once and the token will go to $\fbox{H}$ which is the sink for $A_0$.

\begin{figure}[htbp] 
  \centering\includegraphics[width = 0.45\textwidth]{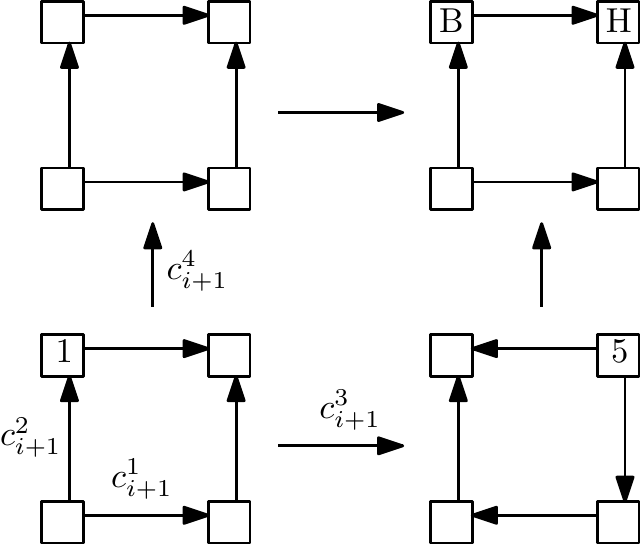}
\caption{The frame $F_3$ which is also the base case 4-AUSO $A_0$. The faces \fbox{1} and \fbox{5} are 
explicitly labeled since the sink $s_i$ of $A_i$ will be found (for the first time; Step (3) in the step-by-step analysis) in one of them.}
\label{fig:cunn_F3}
\end{figure}

The list $L_{i+1}$ is defined based on the list $L_i$ in the following way (here $\cdot$ represents concatenation): 
\[L_{i+1} = L_i \cdot (+c_{i+1}^1,-c_{i+1}^2,+c_{i+1}^3,-c_{i+1}^1,+c_{i+1}^4,-c_{i+1}^3,+c_{i+1}^2,-c_{i+1}^4)\] 
This means that the directions from bundle $D_k$ have priority over the ones from bundle $D_{k'}$, if $k < k'$.
Let us denote by $IN$ the set of directions $D_i^+$ and let us denote by $OUT$ the set of directions from $D_{i+1}$.
The directions from $OUT$ are last in the list $L_{i+1}$. Let the starting vertex be $v_0^{i+1} = \{c_0^2, \ldots, c_{i+1}^2\}$.

We say that the directions from $IN$ have been \emph{exhausted} when the algorithm has already considered them and it is now the 
turn of directions from $OUT$ to be considered. 
Similarly, the directions from $OUT$ have been exhausted when the algorithm has made a full cycle of the list and the next
direction that will be considered is $L[1]$.
Note that frame $F_3$ has the property (w.r.t. $L_{i+1}$) that when $IN$ is exhausted at either \fbox{1} or \fbox{5} the token will 
go to \fbox{B}.

Let us now describe how the different frames are used to connect the different copies of $A_i$. Firstly, the starting vertex
$v_0^{i+1}$ is connected with frame $F_1$ and the sink vertex $s_i$ is connected with $F_3$. 
Note that the starting vertex is in \fbox{1}. 
Our goal is to force the algorithm to behave as follows:
It uses directions from $IN$, only in faces \fbox{1} and \fbox{5}, to follow path $P_i$ (in a projected way) and moves between these two faces,
using directions from $OUT$, whenever $IN$ is exhausted. 

Every time the token is in \fbox{1} and $IN$ gets exhausted, it takes a path 
$\fbox{1} \rightsquigarrow \fbox{5}$. There it arrives at a vertex where $OUT$ is exhausted. So, it uses a direction from $IN$. After one such move we change 
the connecting frame to $F_2$. This can be interpreted as an \emph{adversary argument} and can be implemented by Lemma~\ref{lem:product}.
Similarly, when the token is in \fbox{5} and $IN$ gets exhausted, it takes a path $\fbox{5} \rightsquigarrow \fbox{1}$. 
Like before, after that $OUT$ is exhausted and the algorithm will start considering the list from the beginning. After the first step on one of the directions from 
$IN$ we change the connecting frame ot $F_1$. 

Note that while moving between \fbox{1} and \fbox{5} the algorithm will take the path $P_i$ in a projected way. The adversary argument takes care only of vertices that are on this
projected path. For the rest of the vertices the frame does not really matter, since they will not be on the path $P_{i+1}$ that the algorithm will take.
We use the frame $F_3$ to connect those vertices.

The above procedure produces AUSO $A'_{i+1}$. The latter is acyclic because it has been produced by only using Lemma~\ref{lem:product} and 
all the connecting frames are acyclic. 
That is not the final AUSO as it remains to install the balance-AUSO in the face \fbox{B}. For this 
we use Lemma~\ref{lem:reorientation} to reorient the face \fbox{B} according to the balance-AUSO. The latter is 
the uniform AUSO that has its sink at $v_0^i$. This concludes the construction and the result is $A_{i+1}$. It is easy to see that 
the reorientation does not introduce any cycles: By the previous discussion, only vertices in \fbox{B} could be involved in cycles. Since the orientation in \fbox{B}
is acyclic, any possible cycle must involve coordinates from $C_{i+1}$. However,
in all the connecting frames \fbox{B} has only coordinate $c_{i+1}^1$ outgoing to the hypersink \fbox{H}. 
Thus, it is not possible for any vertex from \fbox{B} to be involved in any cycles and $A_{i+1}$ is acyclic.
Next we give a detailed step-by-step analysis of the behavior of the token on AUSO $A_{i+1}$.
\subparagraph{Step-by-step analysis.} 
The token is initially at the starting vertex $v_0^{i+1} = \{c_0^2, \ldots, c_{i+1}^2\}$. In detail:
\begin{enumerate}[(1)]
\item 
The token is at \fbox{1}. Directions from $IN$ will be used, since they have priority in $L_{i+1}$. 
After some steps they will be exhausted. The connecting frame for $C_{i+1}$ will be $F_1$. 
The token then takes a path $\fbox{1} \rightsquigarrow \fbox{5}$. There,
the $OUT$ directions will be exhausted. The correctness of this argument can be checked in Figure~\ref{fig:cunnframes}.

\item 
The token is now in \fbox{5}. Again, directions from $IN$ will be used, since they have priority in $L_{i+1}$. 
After some steps they will be exhausted. The connecting frame for $C_{i+1}$ will be $F_2$. 
The token then takes a path $\fbox{5} \rightsquigarrow \fbox{10} \rightarrow \fbox{1}$.
There, the $OUT$ directions will be exhausted. The correctness of this argument can be checked in Figure~\ref{fig:cunnframes}.

\item The algorithm will keep looping between steps (1)-(2) until the token reaches 
a vertex $v_{s_i}$ such that $s(v_{s_i}) \cap C_i^+ = \emptyset$ (equivalently, $v_{s_i} \cap C_i^+ = s_i$).
That is a vertex that corresponds to the sink of $A_i$. 

\item The connecting frame for $v_{s_i}$ is $F_3$. From the arguments in (1) and (2) this vertex will be reached in either
\fbox{1} or \fbox{5}. Note that in both cases, the algorithm following $L_{i+1}$ will take the token to \fbox{B}
and the directions from $OUT$ will be exhausted there. 
The correctness of this argument can be checked in Figure~\ref{fig:cunn_F3}.

\item  The token now is at vertex $v_{s_i} \bot \fbox{B}$ and the algorithm will start checking at $L[1]$.
The token takes a path $v_{s_i} \bot \fbox{B} \rightsquigarrow v_0 \bot \fbox{B}$. The last vertex is the sink of \fbox{B}.

\item In the next step, the algorithm will use $+c_{i+1}^1$ to go to \fbox{H}. In there, the token is at vertex $v_0 \bot \fbox{H}$. The coordinates
from $C_{i+1}$ will remain incoming from now on. Thus, the token will take a path in \fbox{H} that is the same as $P_i$.
\end{enumerate}
With the above analysis, we have proved that the path $P_{i+1}$ will have length that is larger than twice the length of path $P_i$.
Therefore, we obtain the recursion $T(n+4) > 2T(n)$ which leads to the proof of Theorem~\ref{thm:cunn}.

\section{Exponential lower bound for Johnson's rule} \label{sec:johnson}
\begin{theorem} \label{thm:johnson}
There exist $n$-AUSO such that Johnson's rule, with a suitable starting vertex, takes a path of length 
at least $2^{n/4}$.
\end{theorem}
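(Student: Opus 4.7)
The plan is to adopt the same inductive scaffold as in Theorem~\ref{thm:cunn}. Take $A_0$ to be a suitably chosen 4-AUSO on $C_0$, and build $A_{i+1}$ from $2^4$ copies of $A_i$ over a fresh bundle $C_{i+1} = \{c_{i+1}^1, \ldots, c_{i+1}^4\}$, glued by three connecting frames $F_1, F_2, F_3$ via Lemma~\ref{lem:product} and followed by a single reorientation via Lemma~\ref{lem:reorientation} installing a uniform balance-AUSO with sink at $v_0 \bot \fbox{B}$ in a distinguished face \fbox{B}. The intended behavior is exactly as in Section~\ref{sec:cunn}: the token starts at $v_0^{i+1} \in \fbox{1}$, executes $P_i$ inside the copies of $A_i$ lying in \fbox{1} and \fbox{5} while shuttling between them via directions of $D_{i+1}$; once it reaches a vertex projecting to $s_i$, frame $F_3$ funnels it into \fbox{B}, the balance gadget walks it to $v_0 \bot \fbox{B}$, and the hypersink \fbox{H} hosts the second traversal of $P_i$.

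The substantive change is that Cunningham's externally-fixed list is replaced by the age ordering Johnson's rule derives from past usage (the \emph{age} of a direction $d$ being the time step when $-d$ was last selected, or $-\infty$ if never). The invariant I would maintain throughout the traversal of $A_{i+1}$ is that, whenever the token sits inside a copy of $A_i$ and is about to move, every direction in $D_{i+1}$ has a strictly younger Johnson-age (i.e. its opposite was used more recently) than the available direction of $D_i^+$ prescribed by the projected path $P_i$. Under this invariant, Johnson's rule is forced to select the prescribed old direction, and each copy is traversed exactly as in $A_i$. I would package this into a strengthened inductive hypothesis: for any initial age vector in which the available $P_i$-directions of $D_i^+$ are uniformly older than every external age, the algorithm started at $v_0^i$ reaches $s_i$ along $P_i$.

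To re-establish the invariant at the start of every leg of $P_i$, each transition $\fbox{1} \rightsquigarrow \fbox{5}$ (and symmetrically $\fbox{5} \rightsquigarrow \fbox{1}$) must touch every direction of $D_{i+1}$, so that at the next entry into a copy the prescribed old $P_i$-direction is older than every new available one. I would construct the frames $F_1, F_2$ by an adversary argument: at each intermediate vertex, read off the direction Johnson's rule is forced to take from the current age vector and orient the next frame edge to realize it, which is legitimate by Lemma~\ref{lem:product}. The very first step at $v_0^{i+1}$, where all ages equal $-\infty$ and ties must be broken, is handled by a natural tie-breaking rule such as the lexicographic order on coordinates and signs, chosen so that the first direction used coincides with the one prescribed by the frame.

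The main obstacle is verifying that the invariant persists through all legs of $P_i$, in particular the second full traversal inside \fbox{H}, because the inductive hypothesis at scale $i$ is invoked on a copy of $A_i$ entered with a non-trivial age vector on $D_i^+$. Here the key observation is that transitions, the balance gadget, and the hypersink routing all modify only directions in $D_{i+1}$, so the ages on $D_i^+$ are not touched between legs and the strengthened hypothesis carries through. Granted this, the step-by-step analysis of Section~\ref{sec:cunn} applies almost verbatim, yielding $|P_{i+1}| > 2|P_i|$, the recursion $T(n+4) > 2T(n)$, and thus $T(n) \geq 2^{n/4}$.
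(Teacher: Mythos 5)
Your scaffold (copies of $A_i$ connected by frames, shuttle between two designated faces, a ``return'' gadget, a hypersink hosting the second traversal, recursion $T(n+4)>2T(n)$) matches the paper's overall plan. But there is a genuine gap at the heart of the argument, and it concerns precisely the step you wave off as harmless: the balance gadget.

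You write that ``transitions, the balance gadget, and the hypersink routing all modify only directions in $D_{i+1}$, so the ages on $D_i^+$ are not touched between legs.'' That is false for the balance gadget. The face \fbox{B} is of the form $F(C_i^+,\cdot)$, i.e.\ it spans exactly the \emph{old} coordinates. Walking the token from $v_{s_i}\bot\fbox{B}$ to $v_0\bot\fbox{B}$ necessarily uses directions from $D_i^+$ (those are the only edges inside \fbox{B}), so it \emph{does} update the Johnson ages on $D_i^+$. Worse, with a uniform balance-AUSO the order in which these $D_i^+$ directions are taken is itself governed by Johnson's rule and the current age vector, and the resulting ages on arrival at $v_0\bot\fbox{B}$ will not, in general, match what you need to reproduce $P_i$ on the second traversal. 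Your stated invariant is also too weak: it only controls the ordering between $D_i^+$ and $D_{i+1}$, whereas reproducing $P_i$ requires controlling the relative ages \emph{within} $D_i^+$ as well (when two $D_i^+$ directions are both available, Johnson's rule picks the older one). A uniform balance-AUSO gives you no handle on that internal ordering, so the inductive hypothesis cannot be re-applied inside \fbox{H}.

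This is exactly where the paper's construction departs from the Cunningham template, and it is the essential new idea you are missing. Instead of a uniform balance-AUSO, the paper builds a dedicated \emph{reset-AUSO} $R_{i+1}$ (itself defined inductively) and installs it in the return face. It is engineered so that the token has exactly one outgoing edge at every vertex of the reset path, and that path traverses precisely $(-c_0^1,-c_0^4,\ldots,-c_i^1,-c_i^4)$ in that fixed order. This forces a controlled age vector on $D_i^+$ afterward, which is what Lemma~\ref{lem:at0allpos} then exploits: starting from $t\cap C_i^+=\emptyset$ with these controlled ages, all positive $D_i^+$ directions will again be used in lexicographic order, so the second traversal coincides with $P_i$. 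One also needs the ``resettable'' condition (Lemma~\ref{lem:resettable}) to hold when the reset begins, which in turn shapes the frames $F_1,F_2$ (the paper's Johnson frames are different from the Cunningham frames, have their sink at $\{c_j^1,c_j^4\}$, and use $v_0=\emptyset$ rather than $\{c_0^2,\ldots,c_{i+1}^2\}$). Without the reset-AUSO and the accompanying bookkeeping of how ages on $D_i^+$ are re-ordered, the second leg of the argument does not go through.
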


In this section we will prove the above theorem.
Firstly, let us define Johnson's least-recently basic rule. Consider that the algorithm runs on an $n$-AUSO.
It maintains a history function $h$ which is defined on all $2n$ directions. Let $v$ be the current vertex. 
Intuitively, the algorithm keeps the following history:  Say direction $d$ was used at step $x$ and let $-d$ be the 
opposite of that direction. Then, at step $x$ we have $h(-d) = x$; this will stay intact until $-d$ is used.
On the other hand, $h(d)$ will keep increasing to the current step until $-d$ is used. 

Formally, for a direction $d$, 
$h(d)$ is the last step number when $|d| \in v$ if $d$ is positive and the last step number when 
$|d| \notin v$ if $d$ is negative. Here, $v$ is the vertex on the path of the algorithm that corresponds to this last step and 
$|d|$ denotes the coordinate that corresponds to direction $d$. 
Note that ties are possible. We assume that 
those are broken lexicographically. 
The algorithm chooses from the set of available directions, direction $d$ which minimizes $h(d)$.

\subparagraph{An example run.} We describe an example run of Johnson's rule on the 4-AUSO $F_1$ that 
is given in Figure~\ref{fig:johnsonframes}. Note that $F_1$ will also serve as the base case $A_0$ for the inductive construction
(to be described after this example).
In the table below the first column is the step number. The second column is the vertex at the beginning of the corresponding step.
The third column is the direction chosen by the algorithm at that step. The rest of the columns show the history of each direction
after the step was performed. Below we use $\pm k$ to mean direction $\pm c_0^k$. The starting vertex is $\fbox{1}$.

\begin{center}
{\footnotesize
\begin{tabular}{|c|c|c||c|c|c|c|c|c|c|c|}
\hline 
Step &  vertex  & $d$  & h(+1) & h(+2) & h(+3) & h(+4) & h(-1) & h(-2) & h(-3) & h(-4) \\
\hline 
\hline
1 & \fbox{1}  & +1 & 1  & 0  & 0  & 0  & 1  & 1  & 1  & 1  \\ 
\hline 
2 & \fbox{2}  & +2 & 2  & 2  & 0  & 0  & 1  & 2  & 2  & 2  \\ 
\hline 
3 & \fbox{3} & +3  & 3  & 3  & 3  & 0  & 1  & 2  & 3  & 3  \\ 
\hline
4 & \fbox{4} & +4  & 4  & 4  & 4  & 4  & 1  & 2  & 3  & 4  \\ 
\hline
5 & \fbox{5} & -3  & 5  & 5  & 5  & 5  & 1  & 2  & 5  & 4  \\ 
\hline
6 & \fbox{R} & -2 & 6  & 6  & 5  & 6  & 1  & 6  & 6  & 4  \\ 
\hline
7 &\fbox{H} &  & 7  & 6  & 5  & 7  & 1  & 7  & 7  & 4   \\ 
\hline
\end{tabular} 
}
\end{center}

\subparagraph{The construction} is inductive.
Let $A_i$ denote the $i$th inductive step.
The base case $A_0$ is the 4-dimensional AUSO $F_1$, shown in Figure~\ref{fig:johnsonframes}.  The initial set of coordinates 
is $C_0 = \{c^1_0, c^2_0, c^3_0, c^4_0\}$. 
The starting vertex is $v_0 = \emptyset$, which is at the vertex labeled $\fbox{1}$ in the figure.
As discussed in the previous paragraph, the algorithm will go over directions $(+c^1_0, +c^2_0, +c^3_0, +c^4_0, -c^3_0, -c^2_0)$ and will 
find the sink at $\{c^1_0,c^4_0\}$. 
Let us now describe how to construct AUSO $A_{i+1}$ from AUSO $A_i$. Every inductive step adds 4 dimensions. 
As before, $C_j^+ = \bigcup_{k=0}^j C_k$. 
Let the new bundle of coordinates be $C_{i+1} = \{c^1_{i+1}, c^2_{i+1}, c^3_{i+1}, c^4_{i+1}\}$.
We take 16 copies of $A_i$ and connect them with the 4-AUSO $F_1$ and $F_2$ that 
appear in Figure~\ref{fig:johnsonframes}. 
In this section, a reset-AUSO (thus, the $\fbox{R}$ in the figure) will take the role of the balancing gadget. 

\begin{figure}[htbp] 
  \centering\includegraphics[width = \textwidth]{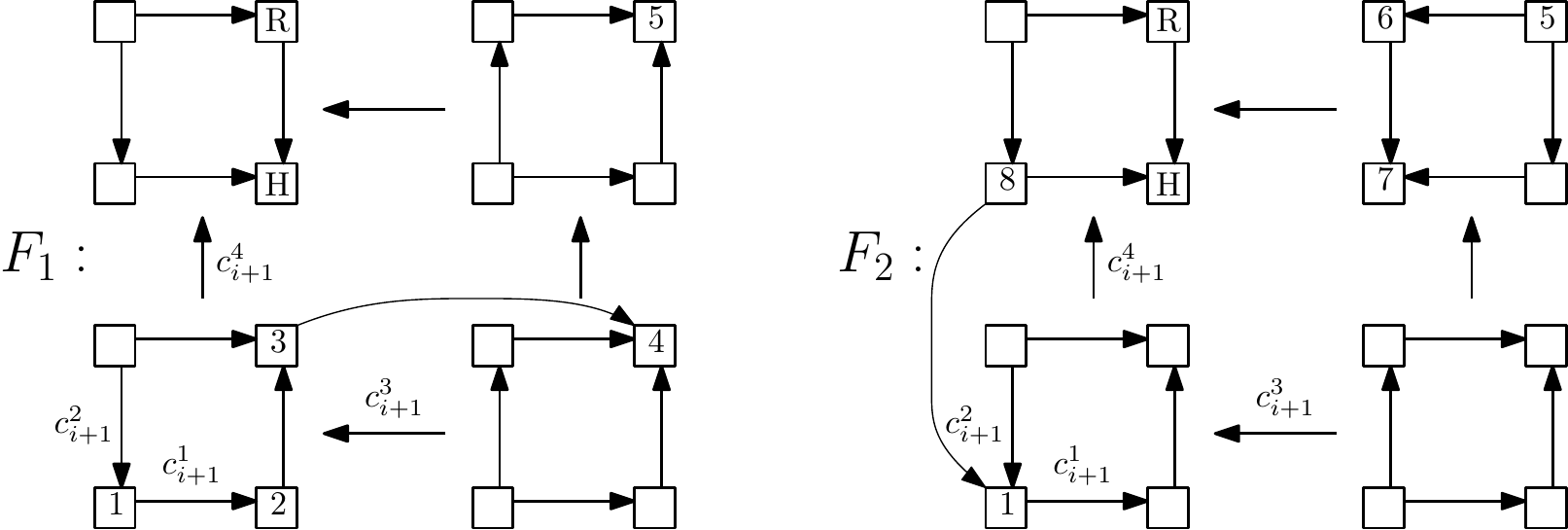}
\caption{The orientations $F_1$ and $F_2$, used as connecting frames, are given in this figure. 
}
\label{fig:johnsonframes}
\end{figure}

Let us define what we mean by \emph{lexicographic order} here: $+c_j^k$ comes before $-c_j^{k'}$ for any $k$ and $k'$
(positive before negative in the same bundle);
$+c_j^k$ comes before $+c_j^{k'}$ and $-c_j^k$ comes before $-c_j^{k'}$ if $k < k'$
(in the same bundle smaller-index directions come first). Finally,
$d_j^k$ comes before $d_{j'}^{k'}$ for any $k$ and $k'$ and positive/negative sign, if $j < j'$
(bundles with smaller index come first).

The starting vertex will be $v_0 = \emptyset$ for every inductive step. 
Then, every positive direction $+c$ initially has $h(+c)=0$ (until $+c$ is used by the algorithm).
At step number 1, one of the positive directions will be used at which point every 
negative direction $-c$ has $h(-c) = 1$.

With this construction we want to force the algorithm to the following behavior: It starts at \fbox{1} using all the positive directions  
in lexicographic order. Then it is in \fbox{5}, where it will use all the negative directions in lexicographic order. This will continue until 
the sink of $A_i$ has been reached. It follows that directions from $D_i^+$ are only used when the algorithm is in \fbox{1} or \fbox{5}, 
before the sink of $A_i$ has been discovered. We will later show that this is the case.

The construction is considered as an adversary argument (similarly to Section~\ref{sec:cunn}). 
Firstly, the starting vertex of $A_i$ and the sink of $A_i$ both use $F_1$ as the connecting frame. 
Every time the algorithm is in $\fbox{1}$ and uses a direction from $D_i^+$, we change (or keep) the connecting frame to $F_1$.
Similarly, when the algorithm arrives in $\fbox{5}$ and uses a direction from $D_i^+$, 
we change the connecting frame to $F_2$ (except if the algorithm has reached the sink of $A_i$).
This operation is consistent with Lemma~\ref{lem:product}: Every vertex is connected with the corresponding frame.
The result of this operation is $A'_{i+1}$ which is not the final AUSO.
The final step for the construction of $A_{i+1}$ is to use Lemma~\ref{lem:reorientation} to embed 
a reset-AUSO in the face
$\fbox{R}$.

\subparagraph{Constructing the reset-AUSO.}
This construction is also inductive. Let $R_i$ denote the $i$th inductive step. 
Then, $R_i$ is a $4i$-AUSO (whereas $A_i$ is a $4(i+1)$-AUSO). Since $R_i$ is placed in a face of $A_i$, 
the coordinates it spans are the ones in $C_{i-1}^+$.
The final step in the construction of $A_{i+1}$ is to embed the reset-AUSO $R_{i+1}$ in the face
$\fbox{R} = F(C_i^+, \{c^1_{i+1}, c^2_{i+1}, c^4_{i+1}\} )$ of $A'_{i+1}$. 

$R_0$ is just a 0-AUSO, i.e. a vertex; or equivalently, there is no reset-AUSO for $A_0$. Then, let $R_1$ be the base case; this 4-AUSO is 
depicted in Figure~\ref{fig:johnsonreset} below.
Then, to construct $R_{i+1}$ from $R_i$ we take 16 copies of $R_i$ which we connect with two frames. 
For every vertex in $R_i$ which is not the sink, we use for connecting frame the uniform 4-AUSO which has its sink at $\{c^1_{i+1},c^4_{i+1}\}$.
For the sink of $R_i$ we use for a frame the 4-AUSO $R_1$. 
The intuition behind reset-AUSO $R_{i+1}$ is summed up by the following property:
It has a path from vertex
$\{c_0^1,c_0^4, \ldots, c_{i}^1,c_{i}^4\}$ to the vertex $\emptyset$ such that every vertex on this path has only one outgoing edge
and the path goes through the negative directions in lexicographic order: $(-c_0^1, -c_0^4,\ldots, -c_i^4, -c_i^4)$. 

\begin{figure}[htbp] 
  \centering\includegraphics[width = 0.4\textwidth]{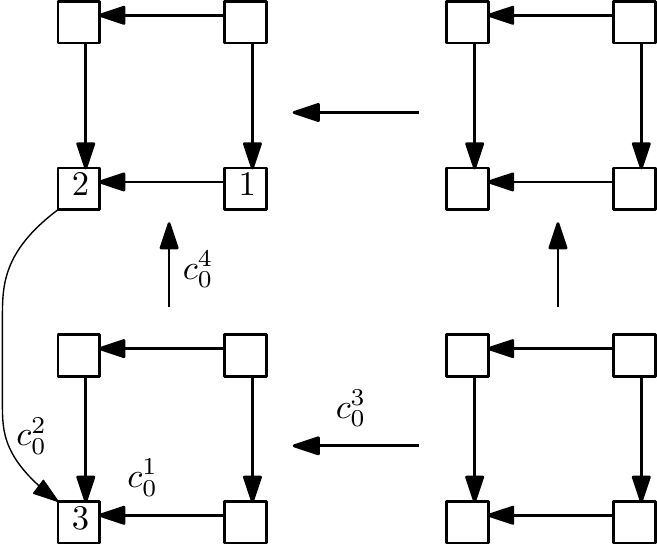}
\caption{The reset-AUSO $R_1$. The numbers indicate the path the token will take when resetting $C_0$ (``resetting'' definition to follow).}
\label{fig:johnsonreset}
\end{figure}

The final step for the construction of $A_{i+1}$ is to use Lemma~\ref{lem:reorientation} to embed $R_{i+1}$ to the face
$\fbox{R}$. This concludes the construction of $A_{i+1}$. 
Note that the last step does not introduce any cycles:
$R_{i+1}$ is acyclic and, thus, the only possible cycles would involve the edges on coordinates external to \fbox{R},
i.e. the coordinates from $C_{i+1}$. However, the hypervertex \fbox{R} has only one outgoing edge on 
$-c_{i+1}^2$ which leads to the hypersink \fbox{H}; the latter does not have any outgoing edges on $C_{i+1}$ and, so, 
a cycle is not possible. We conclude that the constructed AUSO is acyclic.

\subparagraph{The behavior of Johnson's rule} on the AUSO constructed as above will be described here. 
Firstly,  we define the tools that we
are going to use for this analysis.

Similarly to the previous section, consider a token $t$. That is a token that starts at the initial vertex $v_0$ and moves according to the directions that the algorithm 
chooses. With slight abuse of notation we also use $t$ to refer to the vertex where the token currently lies on. Moreover, we write 
$t_j$ to mean the set $t \cap C_j^+$; that is, the projection of the vertex $t$ to the set of coordinates $C_j^+$.
Since $t_j \subseteq t$, we call $t_j$ a subtoken.

We say that a coordinate bundle $C_j$ is \emph{active} when $s(t) \cap C_j \neq \emptyset$.
Otherwise, we say that $C_j$ is inactive. Note that for both the 4-dimensional frames that we have used, the sink is at the same vertex. 
This implies that $C_j$ is inactive if and only if token $t$ is such that 
$t \cap C_j = \{c^1_j, c^4_j\}$.
Moreover, we say that token $t$ is in $\fbox{1}_j$ to mean that $t \cap C_j = \emptyset$; $t$ is in $\fbox{5}_j$
when $t \cap C_j = C_j$ and similarly for the rest of the faces \fbox{$\cdot$} from Figure~\ref{fig:johnsonframes}.

For each subtoken $t_j$, we say that it has reached \emph{its sink} when all bundles in $C_j^+$ are inactive.
This means that
$t_j = \{c_0^1, c_0^4, \ldots, c_j^1, c_j^4\}$.
\emph{Resetting} $C_j^+$ is a process that happens when subtoken $t_j$ is at its sink:
Token $t$ moves
from a vertex where $t \cap C_j^+ = \{c_0^1, c_0^4, \ldots, c_j^1, c_j^4\}$ 
to a vertex where $t \cap C_j^+ = \emptyset$.
Moreover, we say that $C_j^+$ is \emph{resettable} when:  
\begin{itemize}
\item $t_j$ is at its sink. 
\item $h(-c_{j'}^1) < h(-c_{j'}^4) < h(-c_{j'+1}^2)$, for every $0 \leq j' \leq j$. 
\end{itemize}
The first bullet in the definition above equivalently means that all bundles in $C_j^+$ are inactive. 
Resetting $C_j^+$ is a process that takes place when (and only when) 
token $t$ is in the reset-AUSO in $\fbox{R}_{j+1}$. 
For now assume that $C_j^+$ will be reset only when it is resettable; we will prove this later (with Lemma~\ref{lem:resettable}). 
Thus, $t_j$ is on its sink when the resetting process starts.
The second bullet of the definition of resettable ensures that token $t$ will not go out of $\fbox{R}_{j+1}$ before
$C_j^+$ has been reset.
During the reset of  $C_j^+$, token $t$ will
go over negative directions from $D_j^+$ in this order: 
$(-c^1_0, -c^4_0, \ldots, -c^1_j, -c^4_j)$. This is because of the construction of the reset-AUSO $R_{j+1}$; 
the algorithm has no other choice.

We are ready to state the following lemma. It dictates the first steps that the token will take, either at 
the beginning of the execution of the algorithm
or after a reset. Since the algorithm is deterministic, this defines the path that the token will take to the sink.

\begin{lemma} \label{lem:at0allpos}
Let $t \cap C_j^+ = \emptyset$. Then, all the positive directions from
$C_j^+$ will be used in the lexicographic order: $(+c_0^1,+c_0^2,+c_0^3,+c_0^4,\ldots,+c_j^1,+c_j^2,+c_j^3,+c_j^4)$.
\end{lemma}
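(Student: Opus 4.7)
The plan is to prove the lemma by induction on the step $p$ in the proposed sequence. At step $p=1$, starting from $t$ with $t \cap C_j^+ = \emptyset$, I would show that Johnson's rule picks $+c_0^1$; then at each subsequent step, I would show that the $p$-th positive in lex order is chosen, given that the first $p-1$ lex positives have been used in order.

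The core structural claim I would establish (by a secondary induction on $j$) is the following: at each vertex $v_p$ along the lex path $t,\; t \cup \{c_0^1\},\; t \cup \{c_0^1, c_0^2\},\; \ldots,\; t \cup C_j^+$, the only outgoing direction on coordinates from $C_j^+$ is the $p$-th lex positive. In the base case, this is a direct property of $F_1$ as $A_0$: at $\fbox{1} = \emptyset$ only $+c_0^1$ is outgoing, at $\fbox{2} = \{c_0^1\}$ only $+c_0^2$, and so on through $\fbox{5}$, which can be read off from the example run. For the inductive step, I would use that within the innermost copy of $A_{j-1}$ at $\fbox{1}$ of the $C_j$ frame the structural property holds by induction, and the $C_j$ frame $F_1$ contributes the forward path $\fbox{1} \to \fbox{2} \to \fbox{3} \to \fbox{4} \to \fbox{5}$ using $+c_j^1, \ldots, +c_j^4$ with only the next lex edge outgoing at each frame vertex.

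Once the structural claim is available, the history analysis becomes easy: at every step $p$ of the sequence the only direction available within $C_j^+$ is the $p$-th lex positive, and any outer-bundle directions either are not available at $v_p$ or have histories updated to the current step and hence cannot beat a low-history interior direction. This holds whether $t$ is the global start or the vertex reached after a reset, because the restriction on outgoing edges is purely structural and independent of the history function.

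The main obstacle I anticipate is establishing the structural claim in full detail. In particular, in the post-reset case one must also verify that at the starting vertex of the $A_j$ copy inside $\fbox{H}$ of the $C_{j+1}$ frame (the vertex reached after resetting $C_j^+$ and performing the subsequent $-c_{j+1}^2$ move), no outer-bundle direction is available either; this should follow because $\fbox{H}$ is the sink of the $C_{j+1}$ frame and so has no outgoing edges on $C_{j+1}$. Analogous checks for copies deeper than one reset level require tracing how the adversary-chosen frames at outer levels restrict outgoing edges, and this bookkeeping is where the bulk of the work lies.
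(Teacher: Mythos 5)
The core of your argument is the structural claim that at each vertex $v_p$ along the proposed path the \emph{only} outgoing direction in $C_j^+$ is the $p$-th lexicographic positive. This is too strong, and the paper's own proof in fact does not rely on it: the frame $F_1$ only guarantees that $+c_k^2$ requires $+c_k^1$ to have been used and that $+c_k^3$ requires $+c_k^1$ and $+c_k^2$ to have been used; nothing prevents $+c_k^4$, or positives of other bundles $+c_{k'}^1$, $+c_{k'}^4$, from also being outgoing at the same vertex. The tell-tale sign is that the paper explicitly invokes the \emph{lexicographic tie-breaking} rule for the case where $t$ is the global starting vertex (``all positive directions have history 0 and they will be used in the lexicographic order''). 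If your structural claim were true there would be exactly one available positive at each step and tie-breaking would never be invoked, so the paper's appeal to tie-breaking is direct evidence that your one-outgoing-edge claim fails.

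Because of this, the ``history analysis becomes easy'' step does not follow. The paper's proof must (and does) compare histories among the simultaneously-available positives. In the post-reset case the key facts, derived from the reset order $(-c_0^1,-c_0^4,\ldots,-c_j^1,-c_j^4)$, are $h(+c_k^2)<h(+c_k^1)$, $h(+c_k^3)<h(+c_k^1)$, $h(+c_k^1)<h(+c_k^4)$, and $h(+c_k^1)<h(+c_{k'}^1)$ for $k<k'$; combined with the two structural unavailability constraints from $F_1$, these inequalities force the lexicographic order even when several positives are simultaneously outgoing. Your proposal has no substitute for these inequalities, so the argument does not close. (A secondary imprecision: for an available positive direction $+c'$ with $c'\notin v$, $h(+c')$ is frozen, not ``updated to the current step''; it is $h(-c')$ that tracks the current step while $c'\notin v$. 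Your dismissal of outer-bundle competitors rests on this misreading of the history function.)
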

\begin{proof}

The token $t$ is at a vertex such that $t \cap C_j^+ = \emptyset$. Note that the connecting frame is $F_1$ for all 
the bundles in $C_j^+$. There are two cases. The first case is when the algorithm 
is just starting at the initial vertex $v_0$. Since $v_0 = \emptyset$, all positive directions have history 0 and they will be used in the lexicographic order.

The other case is after $C_j^+$ has been reset. By the definition of resetting, we have that $(-c_0^1, -c_0^4, \ldots, -c_j^1, -c_j^4)$
have been lastly used. Thus, for every $0 \leq k \leq j$ we have that $h(+c_k^1) < h(+c_k^4)$,
$h(+c_k^2) < h(+c_k^1)$ and $h(+c_k^3) <  h(+c_k^1)$.
In addition, $h(+c_k^1) < h(+c_{k'}^1)$ for all $0 \leq k < k' \leq j$. 
Since the connecting frame is $F_1$ for all the bundles in $C_j^+$, we have that direction $+c_k^2$ is only available after 
$+c_k^1$ has been used and direction $+c_k^3$ is only available after $+c_k^1$ and $+c_k^2$ has been used.
So, the direction $+c_0^1$ is the first one that will be used. Then, $+c_0^2$
becomes available and will be used, then $+c_0^3$ becomes available and will be used and, finally,  $+c_0^4$ will be used. 
Then, $+c_1^1$ will be used and so on and so forth. We conclude that all the positive directions from $C_j^+$ 
will be used in the lexicographic order.
\end{proof}

The above lemma defines the path that token $t$ will follow until subtoken $t_j$ reaches its sink. For every bundle $C_j$, the 
positive directions are used in lexicographic order $(+c_j^1,+c_j^2,+c_j^3,+c_j^4)$ and the token $t$ goes to $\fbox{5}_j$.
After this, we have $h(-c_j^1) < h(-c_j^2) < h(-c_j^3) < h(-c_j^4) < h(d)$ for any positive $d$ from $D_j$. 
Then, some directions from $D_{j-1}^+$ will be used and the frame for $C_j$ will change to $F_2$. When it is the turn of the negative 
directions from $D_j$ to be used they will be used consecutively and 
in lexicographic order $(-c_j^1,-c_j^2,-c_j^3,-c_j^4)$; that is, assuming $t_{j-1}$ has not reached its sink yet. 
Otherwise, the connecting frame for $C_j$ would be $F_1$ and the directions $-c_j^1$ and $-c_j^4$ would not be available.

After this, $t$ will be in $\fbox{1}_j$ and, moreover, $h(+c_j^1) < h(+c_j^2) < h(+c_j^3) < h(+c_j^4) < h(d)$ for any negative $d$ from $D_j$.
There, after some direction from $D_{j-1}^+$ is used, the connecting frame for $C_j$ will be $F_1$. 
When it is the turn of the positive 
directions from $D_j$ to be used they will be used consecutively and 
in lexicographic order $(+c_j^1,+c_j^2,+c_j^3,+c_j^4)$ and the token $t$ will go back to $\fbox{5}_j$. 
We can conclude that $t$ will keep moving from $\fbox{1}_j$ to $\fbox{5}_j$ and back until subtoken $t_{j-1}$ reaches its sink. 
The next corollary follows from this discussion.

\begin{corollary} \label{cor:johnson}
Let $C_{j+1}$ be active. 
\begin{enumerate}
\item If $t$ is in $\fbox{1}_{j+1}$, the positive directions from $D_{j+1}$  will be used (when it is their turn) 
consecutively as $(+c_{j+1}^1,+c_{j+1}^2,+c_{j+1}^3,+c_{j+1}^4)$.
\item If $t$ is in $\fbox{5}_{j+1}$ and 
subtoken $t_{j}$ has not reached its sink, then the negative directions from $D_{j+1}$  will, similarly, be used (when it is their turn) 
consecutively as $(-c_{j+1}^1,-c_{j+1}^2,-c_{j+1}^3,-c_{j+1}^4)$. 
\end{enumerate}
It follows that directions from $D_{j}^+$ are only used when $t$ is in $\fbox{1}_{j+1}$ or in $\fbox{5}_{j+1}$.
\end{corollary}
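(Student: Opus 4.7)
My approach is to combine a structural inspection of the frames $F_1, F_2$ in Figure~\ref{fig:johnsonframes} with the history analysis developed in the paragraphs preceding the corollary. The discussion there already establishes the analogous statement at level~$j$; my plan is to reindex and verify that the same reasoning applies to $C_{j+1}$, and then to derive the closing sentence by elimination.

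For Part~1, I would first read off the following key property of $F_1$ from the figure: at each of the four frame-vertices $\fbox{k}_{j+1}$ for $k\in\{1,2,3,4\}$ on the path from $\fbox{1}_{j+1}$ to $\fbox{5}_{j+1}$, the unique $C_{j+1}$-outgoing edge is $+c_{j+1}^k$, leading to the next vertex on this path. I would then argue by induction on $k$ that the algorithm plays $+c_{j+1}^k$ at the $k$-th step of the burst. The base case $k=1$ is reached once the inner subtoken $t_j$ has walked to the sink of $A_j$ inside $\fbox{1}_{j+1}$: at that instant no $D_j^+$ direction is outgoing, only $+c_{j+1}^1$ is, and the algorithm has no choice. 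For the inductive step, the Product lemma guarantees that moving along $+c_{j+1}^k$ preserves the inner position at the sink of $A_j$, where again no $D_j^+$ direction is outgoing; hence the algorithm must play a $C_{j+1}$-direction, and only $+c_{j+1}^{k+1}$ is outgoing at the new frame-vertex.

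Part~2 is symmetric, using $F_2$ in place of $F_1$ and negative directions in place of positive. The hypothesis that $t_j$ has not reached its sink is exactly what causes the adversary to install $F_2$ as the frame for $C_{j+1}$, so that the negatives of $D_{j+1}$ are available in lex order along the $F_2$-path. The history inequalities $h(-c_{j+1}^1) < h(-c_{j+1}^2) < h(-c_{j+1}^3) < h(-c_{j+1}^4)$ noted in the discussion preceding the corollary, together with the lex tiebreaking rule, guarantee that the negatives are chosen in the required order as the frame makes them successively available. The closing sentence then follows by elimination: Parts~1 and~2 show that whenever $t\cap C_{j+1}$ is a proper non-empty subset of $C_{j+1}$, the token is in the middle of a 4-step burst consisting only of $D_{j+1}$-directions; consequently, directions from $D_j^+$ can only be used while $t$ lies in $\fbox{1}_{j+1}$ or $\fbox{5}_{j+1}$.

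The principal obstacle I expect to handle carefully is verifying, in Part~1, that the inner $A_j$-walk inside $\fbox{1}_{j+1}$ really does terminate at the sink of $A_j$ before $+c_{j+1}^1$ is played. Because $t$ may enter $\fbox{1}_{j+1}$ at a non-empty inner vertex (after a negative burst), Lemma~\ref{lem:at0allpos} does not literally apply, and one instead has to rely on an inductive hypothesis on the bundle index $j$: the behavior of the algorithm inside $\fbox{1}_{j+1}$ mimics its behavior on $A_j$ alone, which by induction visits the sink of $A_j$. Formalizing this inductive lift — essentially the same bookkeeping already carried out in the discussion preceding the corollary — is the delicate part; the rest of the argument is a direct reading of Figure~\ref{fig:johnsonframes} and a tally of $h$-values.
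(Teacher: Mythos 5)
Your Part~2 is on the right track, but your Part~1 contains a genuine error that unravels the whole argument. You claim that the positive burst begins ``once the inner subtoken $t_j$ has walked to the sink of $A_j$ inside $\fbox{1}_{j+1}$: at that instant no $D_j^+$ direction is outgoing, only $+c_{j+1}^1$ is, and the algorithm has no choice.'' This is false. The subtoken $t_j$ reaching the sink of $A_j$ is a \emph{terminating} event (it happens once, and triggers the reset), whereas the positive burst from $\fbox{1}_{j+1}$ to $\fbox{5}_{j+1}$ happens many times over the run. At the moment a positive burst begins, $t_j$ is at the all-ones vertex $\{c_0^1,\ldots,c_j^4\}$ (i.e.\ at $\fbox{5}_j$), \emph{not} at the sink of $A_j$, and there \emph{is} an available direction in $D_j^+$ (namely $-c_j^3$). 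The algorithm chooses $+c_{j+1}^1$ not because it has no choice, but because $h(+c_{j+1}^1)$ is smaller than the history of every available $D_j^+$ direction. Your inductive step inherits the same flaw: the Product lemma does fix the inner $A_j$-position during the burst, but that position is \emph{not} the sink of $A_j$, and $D_j^+$ directions remain outgoing throughout; one must again invoke the history inequalities $h(+c_{j+1}^1)<h(+c_{j+1}^2)<h(+c_{j+1}^3)<h(+c_{j+1}^4)<h(d)$ for every negative $d\in D_{j+1}$, together with the sequential availability in $F_1$, exactly as you did for Part~2. Notice, too, that the parenthetical ``when it is their turn'' in the corollary deliberately leaves the timing of the burst unspecified; you are not asked to pin it down, only to argue that once a $D_{j+1}$-direction is chosen, the remaining three follow consecutively. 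The ``principal obstacle'' you flag at the end — proving that the inner walk terminates at the sink of $A_j$ before $+c_{j+1}^1$ is played — is therefore not an obstacle to be overcome but a false premise to be discarded.

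The paper itself derives the corollary from the two paragraphs immediately preceding it: after Lemma~\ref{lem:at0allpos} the positive $D_{j+1}$ histories are strictly increasing in the lexicographic order and strictly below all negative $D_{j+1}$ histories (and symmetrically after a negative burst), and the frame ($F_1$ resp.\ $F_2$) makes $+c_{j+1}^{k+1}$ (resp.\ $-c_{j+1}^{k+1}$) available only once $+c_{j+1}^k$ (resp.\ $-c_{j+1}^k$) has been played. Your closing elimination argument for the final sentence is fine once Parts~1 and~2 are proved correctly: whenever $t\cap C_{j+1}$ is a proper nonempty subset of $C_{j+1}$ (and $C_{j+1}$ is active, so $t$ is not at $\fbox{H}_{j+1}$), $t$ is mid-burst and only $D_{j+1}$-directions are used.
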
 



The next lemma is the last ingredient needed for the proof of Theorem~\ref{thm:johnson}.
\begin{lemma} \label{lem:resettable}
Let $C_{j+1}$ be active. When $t_j$ reaches its sink, $C_j^+$ is resettable.
\end{lemma}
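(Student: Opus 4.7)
The plan is to check each part of the resettable condition in turn. The first bullet is immediate from the hypothesis. For the second, I fix any $0 \le j' \le j$ and verify $h(-c_{j'}^1) < h(-c_{j'}^4)$ and $h(-c_{j'}^4) < h(-c_{j'+1}^2)$ separately.

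For the intra-bundle inequality, since $c_{j'}^1$ and $c_{j'}^4$ both lie in $t$ (because $t_j$ sits at its sink), the two history values equal the steps at which $+c_{j'}^1$ and $+c_{j'}^4$ were most recently applied. By Corollary~\ref{cor:johnson}(1), any pass of $D_{j'}$ positives fires them consecutively in lexicographic order, so in the last such pass $+c_{j'}^1$ is applied exactly three steps before $+c_{j'}^4$, yielding the strict inequality.

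For the cross-bundle inequality with $j' < j$, the bundle $C_{j'+1}$ is inactive (a direct consequence of $t_j$ being at its sink), so $c_{j'+1}^2 \notin t$ and $h(-c_{j'+1}^2)$ equals the current step, strictly exceeding $h(-c_{j'}^4)$. For the cross-bundle inequality with $j' = j$, Corollary~\ref{cor:johnson} forces $t$ to lie in $\fbox{1}_{j+1}$ or $\fbox{5}_{j+1}$ at this moment (since the step that placed $t_j$ at its sink used a direction from $D_j^+$). The $\fbox{1}_{j+1}$ subcase reduces to the previous argument. In the $\fbox{5}_{j+1}$ subcase, $h(-c_{j+1}^2)$ records the step of the last application of $+c_{j+1}^2$; I would argue that after the most recent configuration with $t \cap C_j^+ = \emptyset$ (triggered by a reset), Lemma~\ref{lem:at0allpos} forces positives of $C_j^+$ to fire in lex order, in particular placing the use of $+c_j^4$ strictly before the subsequent consecutive pass of $D_{j+1}$ positives (during which $+c_{j+1}^2$ is applied).

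The main obstacle is establishing that no further application of $+c_j^4$ occurs between the last pass of $D_{j+1}$ positives (which brought $t$ into $\fbox{5}_{j+1}$) and the moment $t_j$ enters its sink. This amounts to a structural property of the inner $A_j$ path: beyond its last use of $+c_j^4$, the tail consists solely of negative directions. I would justify this by induction on $j$, starting from the base case $A_0 = F_1$ where the path from $\fbox{5}$ to the sink uses only $-c_0^3$ and $-c_0^2$, and propagating the property through the inductive construction of $A_{j+1}$ from $A_j$ together with Corollary~\ref{cor:johnson}.
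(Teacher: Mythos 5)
Your decomposition is reasonable: the intra-bundle inequalities $h(-c_{j'}^1)<h(-c_{j'}^4)$ via Corollary~\ref{cor:johnson}(1), and the cross-bundle inequalities for $j'<j$ via the observation that $c_{j'+1}^2\notin t$ (since $C_{j'+1}$ is inactive) so $h(-c_{j'+1}^2)$ is maximal. That second piece is actually a cleaner argument than the paper gives for those indices. However, the load-bearing case $j'=j$ in the $\fbox{5}_{j+1}$ subcase is exactly where your proposal breaks down, and the structural property you propose to fill the gap is false.

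Concretely, your claimed property is that in $P_j$, beyond the last use of $+c_j^4$, the tail consists solely of negative directions. Already for $j=1$ this fails: $P_1$ ends with
$(\ldots, +c_1^4, -c_0^3, -c_0^2, -c_1^3, -c_0^1, -c_0^4, -c_1^2, +c_0^1, +c_0^2, +c_0^3, +c_0^4, -c_0^3, -c_0^2)$,
because after $C_0^+$ is reset and $C_1$ becomes inactive, Lemma~\ref{lem:at0allpos} forces all positives of $D_0$ to fire again before the sink is reached. So the tail after the last $+c_1^4$ contains positives, and your induction on this property cannot get off the ground. The paper avoids this entirely: rather than reasoning about what happens after the last pass to $\fbox{5}_{j+1}$, it tracks a specific temporal order of events from the \emph{inner} step $j$ upward. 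After $t_{j-1}$ reaches its sink, the algorithm uses $-c_j^3$, resets $C_{j-1}^+$ in $\fbox{R}_j$ (valid by the inductive hypothesis on resettability), then uses $-c_j^2$, at which point $C_j$ becomes \emph{inactive and stays inactive}. From then on no direction in $D_j$ is ever used again, so $h(-c_j^4)$ is frozen at its pre-inactivity value; meanwhile, since $C_{j+1}$ is still active, some later pass uses $+c_{j+1}^2$, refreshing $h(-c_{j+1}^2)$ to a strictly larger step. This order-of-events argument is what you are missing, and it cannot be replaced by a property of $P_j$ alone — it genuinely uses the reset mechanism and the permanence of $C_j$'s inactivity.

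Two minor points. First, when invoking Corollary~\ref{cor:johnson}(1) for the intra-bundle inequality you should also justify that the last firings of $+c_{j'}^1$ and $+c_{j'}^4$ indeed occur within the same consecutive pass (rather than straddling a reset of $C_{j'-1}^+$, for instance); the paper's explicit event-tracking handles this, your one-line statement does not. Second, the paper organizes the whole lemma as a single induction on the bundle index, whereas you split into $j'<j$ versus $j'=j$ and then appeal to a separate, ill-specified induction only at the end; even if your proposed property were true, the two inductions would need to be carefully reconciled.
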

\begin{proof}
By definition, when $t_j$ reaches its sink all bundles in $C_j^+$ are inactive. 

We prove the statement by induction on $j$. For the base case $j=0$, we explicitly give the directions 
from $D_1^+$ in the order they are used by the algorithm until $t_0$ reaches its sink: 
$(+c_0^1,+c_0^2,+c_0^3,+c_0^4,+c_1^1,+c_1^2,+c_1^3,+c_1^4,-c_0^3,-c_0^2)$.
It follows that $C_0^+ = C_0$ is resettable when $t_0$ reaches its sink (while $C_1$ is still active).

Now consider any $k$, $1 < k \leq j$. Let $C_{k+1}$ be active; 
we will prove that $C_k^+$ is resettable when $t_k$
reaches its sink.
Consider 
the last step when
$t$ was in $\fbox{1}_k$. 
The algorithm will use directions $(+c_k^1,+c_k^2,+c_k^3,+c_k^4)$,
by Corollary~\ref{cor:johnson},
and token $t$ will go to $\fbox{5}_k$. 
Note that this is regardless of whether subtoken $t_{k-1}$ has reached its sink.
The result is that $h(-c_k^1)<h(-c_k^2)<h(-c_k^3)<h(-c_k^4)$. The subtoken $t_{k-1}$ reaches its sink
either when $t$ is in $\fbox{1}_k$ or when $t$ is in $\fbox{5}_k$, by Corollary~\ref{cor:johnson}.

Afterwards, 
the algorithm uses direction $-c_k^3$ and $t$ goes to $\fbox{R}_k$
($C_k$ is still active). Therein, $C_{k-1}^+$ will get reset since it is resettable
by the inductive hypothesis. Following, 
the algorithm uses $-c_k^2$, $t$ goes to $\fbox{H}_k$,
and $C_k$ becomes inactive. So, we still have $h(-c_k^1)< h(-c_k^4)$. In addition, 
since $C_{k-1}^+$ just got reset we have that $t\cap C_{k-1}^+ = \emptyset$. 
By Lemma~\ref{lem:at0allpos}, all the positive directions from $D_{k-1}^+$ will be used. 
Since $C_{k+1}$ is still active, the positive directions from $D_{k+1}$ will be used at least once; 
this includes $+c_{j+1}^2$. However,  $C_k$ is already inactive and no directions from $D_k$ will be used.
Thus, we have $h(-c_k^1)< h(-c_k^4) < h(-c_{k+1}^2)$.
\end{proof}

Now consider the path of the token $t$ from $v_0$ to the sink of $A_{i+1}$. This AUSO is of 
dimension $n = 4(i+2)$.
By Corollary~\ref{cor:johnson}, $t$ will be moving back and forth between $\fbox{1}$ and $\fbox{5}$ until $t_{i}$ reaches its sink.
If we project the path that $t$ has taken until this point on coordinates from $C_i^+$ it will be the same as $P_i$.
When $t_i$ reaches its sink, $C_{i}^+$ is resettable (by Lemma~\ref{lem:resettable}) and, when $t$ enters $\fbox{R}$, $C_{i}^+$ will be reset. 
Following, $t$ enters $\fbox{H}$ at vertex $v_0 \bot \fbox{H}$. 

Consider the path $P'$ that $t$ will follow from $v_0\bot \fbox{H}$ until $t_{i}$ reaches its sink. 
Since every vertex in $\fbox{H}$ has only incoming edges on $C_{i+1}$, the coordinates that the path uses are 
from the set $C_i^+$.
From Lemma~\ref{lem:at0allpos}, we know that when token $t$ is such that $t\cap C_{i}^+ = \emptyset$, all 
the positive directions from $D_{i}^+$ will be used in the lexicographic order. 
Therefore, the path $P'$ will be the same as $P_i$ when $t_i$ reaches its sink, at which point $t$ will have 
reached the global sink of $A_{i+1}$.

Let $T(n)$ denote the length of the path that token $t$ will take from $v_0$ until it reaches the global sink on a $n$-AUSO.
With the above analysis, we have shown that the recursion  
$T(n+4) > 2T(n)$ holds. This recursion leads to the proof of Theorem~\ref{thm:johnson}.

\section{Exponential lower bound for Zadeh's Rule} \label{sec:zadeh}

\begin{theorem} \label{thm:zadeh}
There exist $n$-AUSO such that Zadeh's rule, with a suitable starting vertex and tie-breaking rule, takes a path of length 
at least $2^{n/6}$.
\end{theorem}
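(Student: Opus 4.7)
The plan is to follow the inductive framework established in Sections~\ref{sec:cunn} and \ref{sec:johnson}, but with bundles $C_i$ of six coordinates each (so $|D_i| = 12$) and with a more elaborate connecting structure that respects Zadeh's global counting. The base case $A_0$ will be an explicitly drawn $6$-AUSO whose behavior under Zadeh's rule traverses a short path. The tie-breaking rule will be a fixed total order $\pi$ on $\pm[n]$ satisfying the \emph{bundle-priority} property that every direction in $D_k$ precedes every direction in $D_{k'}$ for $k < k'$; inside each bundle $\pi$ will be tuned to mimic the lexicographic pattern used in Section~\ref{sec:johnson}. The starting vertex $v_0^{i+1}$ and its coordinates on $C_i^+$ are chosen so that the token begins at the starting vertex of some distinguished copy of $A_i$.

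For the inductive step I would take a few carefully chosen copies of $A_i$ (call them $\fbox{1}, \fbox{2}, \ldots$) embedded in the $6$-dimensional frame, together with one face $\fbox{B}$ used for the balancing gadget installed via Lemma~\ref{lem:reorientation}. The frames themselves are produced via Lemma~\ref{lem:product}, with an adversary argument updating the frame after each step exactly as in Section~\ref{sec:cunn}. The critical design requirement is: every time the token, while inside a copy, has just executed a ``block'' of $D_i^+$-moves whose further continuation would make the minimum history of an available direction in $D_i^+$ strictly larger than that of some available $D_{i+1}$-direction, the frame must offer exactly one outgoing $D_{i+1}$-edge, and this edge must lead into another copy of $A_i$ at the vertex corresponding to the one just left. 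Using six coordinates instead of four gives enough room to arrange these exits so that across a full traversal of $P_i$ (split between copies) each of the $12$ directions of $D_{i+1}$ gets used the \emph{same} number of times, which is the crucial balancing invariant.

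The heart of the proof is the inductive invariant, which I would state roughly as follows: at every moment of the execution, the histories of all directions in $D_i^+$ stay within a bounded additive offset of one another, the histories of the $D_{i+1}$-directions stay exactly equal (up to $\pm 1$) to one another, and the multiset of positions of the token projected to $C_i^+$ along the run so far equals the corresponding prefix of $P_i$ repeated a controlled number of times. Given this invariant plus the fixed list $\pi$ with bundle priority, Zadeh's rule is forced at each step to behave as the adversary prescribes: all $D_i^+$-directions look no worse than any $D_{i+1}$-direction, so the token prefers to continue in its current copy, and when it must leave, the fixed list $\pi$ selects the unique pre-designed exit. The balancing gadget in $\fbox{B}$, together with the acyclicity argument already used in Sections~\ref{sec:cunn} and \ref{sec:johnson} (only one outgoing $C_{i+1}$-edge from \fbox{B}), ensures acyclicity of $A_{i+1}$.

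The main obstacle will be verifying that the histories actually remain balanced through each phase, because Zadeh's rule is global and sensitive to any drift in the counts. Concretely, I expect the hardest case to be the transitions between copies: one must check that by the end of a traversal, every direction of $D_{i+1}$ has been used the same number of times, and that the $D_i^+$-histories are in a configuration indistinguishable from the configuration at the start of the next traversal — so that the adversary may invoke the inductive hypothesis to conclude that the token retraces (a projection of) $P_i$ a second time. Once this balancing invariant is established, the doubling $|P_{i+1}| \ge 2|P_i|$ is immediate, giving the recursion $T(n+6) \ge 2T(n)$ and hence the claimed $2^{n/6}$ lower bound.
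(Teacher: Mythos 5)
Your high-level plan matches the paper's: bundles of six coordinates, a bundle-priority tie-breaking list, an adversary choice of connecting frames via Lemma~\ref{lem:product}, a balancing gadget $\fbox{B}$ installed by Lemma~\ref{lem:reorientation} that returns the $C_i^+$-projection to the start of $P_i$, and the doubling recursion $T(n+6) > 2T(n)$. So you are on the right track, and your identification of ``the histories must stay balanced'' as the heart of the matter is exactly where the paper's difficulty lies.

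However, the proposal as written has two concrete inaccuracies and one genuine gap. First, you write that whenever the token must leave a copy ``the frame must offer exactly one outgoing $D_{i+1}$-edge,'' but that is not how the construction works: between two consecutive $D_i^+$-moves the token walks a \emph{long} path through the $C_{i+1}$-frame (roughly half the $2^6$-cube, $\fbox{1}\rightsquigarrow\fbox{12}$ then one edge back via $-c_{i+1}^6$), precisely because Zadeh's rule needs all twelve $D_{i+1}$-directions to be ``caught up'' before any $D_i^+$-direction is allowed to pull ahead. A single exit edge per phase would immediately make eleven $OUT$-directions starved, and Zadeh's rule would then refuse to ever choose a $D_i^+$-direction again. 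Second, your invariant ``each of the twelve directions of $D_{i+1}$ gets used the same number of times'' over a full traversal is not what holds; the paper maintains the finer Property~(iii): at the sink of $A_k$, exactly the four directions $-c_k^3,\dots,-c_k^6$ are imbalanced by one, and this asymmetry is needed for the balance-AUSO in $\fbox{B}$ to be entered and walked correctly.

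The genuine gap is that the proposal never actually constructs the frames $F_1,F_2,F_3$ nor proves that they force the claimed behavior; it defers this to ``the frame must offer'' without showing that such frames exist and are AUSO. That is the entire technical content of the theorem. In particular, you need to define a \emph{saturation} condition (a vertex where no available direction is imbalanced) and prove the analogue of Property~(ii): starting from a $D_{i+1}^+$-saturated vertex, the token uses each direction at most once before reaching another $D_{i+1}^+$-saturated vertex. You also need the analogue of Lemma~\ref{lem:nextnotinsat}, showing that after saturation the next $IN$-move lands in a non-$IN$-saturated vertex (so the adversary can legitimately switch back to $F_1$). Without these, the claim that ``Zadeh's rule is forced at each step to behave as the adversary prescribes'' is an assertion rather than a proof.
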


In this section we will prove the above theorem.
Firstly, let us define formally Zadeh's least entered rule.
Consider that the algorithm runs on an $n$-AUSO. It maintains a history function $h$ which is defined 
on all $2n$ directions.
Given a direction $d$, $h(d)$ is the number of times the direction $d$ has been used. At the beginning $h(d)=0$, for all $d$.
At every step the algorithm picks one direction from the set of available ones that minimizes the history function.
In addition, there is a tie-breaking rule: this is an ordering of the directions and is invoked only in case 
more than one have the minimum history size. As we already mentioned in Section~\ref{sec:intro},
our lower bound construction will have the \emph{simplest} possible tie-breaking rule, an ordered list which will be given explicitly. This is in  contrast 
to the lower bounds by Friedmann \cite{Friedmann11}.

Secondly, let us define some tools that we will use for the analysis of the algorithm.
We have a \emph{balance} function $b$, which is also defined on all the $2n$ directions. 
Let $d_{max}$ be the most used direction; then, $b(d) = h(d_{max}) - h(d)$. This means that direction $d$ 
has been used $b(d)$ less times compared to $d_{max}$. 
We say that a direction $d$ is \emph{imbalanced} when $b(d) > 0$ and that a set of directions 
$D$ is balanced when $b(d)=0$, for all $d\in D$.
We also define a balance function on any subset of directions: Given set $D$ we define $b(D,d)$ to be the balance 
of direction $d$ w.r.t. the directions from $D$, i.e. the defining coordinate is now $d_{max} \in D$.

Furthermore, we define the concept of \emph{saturation}. This is with regards to the history and the current vertex 
in the algorithm run. 
Given a set of directions $D$ and a vertex $v$ we say that $v$ is $D$-saturated when
for every $d \in D$ with $b(d) > 0$, the direction $d$ is not available for $v$. It follows that if α vertex $v$ 
is $D$-balanced, then $v$ is also $D$-saturated.

\vspace{-0.3cm}
\subparagraph{The construction } is inductive. 
Let $A_i$ denote the $i$th inductive step. The base case, $A_0$, 
is a 6-AUSO. This will be described later with Figure~\ref{fig:zadehBC}.
Following, every inductive step adds 6 new dimensions.
As before the bundle of coordinates $C_i$ is the one added at the $i$th step of induction
(and $C_0$ is the bundle for the base case). Also, with $D_i$ we denote the directions that correspond to $C_i$
and, similarly, for $D_i^+$.
Thus, the AUSO $A_i$ is $6(i+1)$-dimensional and the coordinates that describe it are in the set $C_i^+$.
For each $A_i$, the starting vertex is $v_0^i = \{c_0^2, \ldots, c_i^2\}$. 


Let us now describe how to construct AUSO $A_{i+1}$ from AUSO $A_i$. 
Let the new bundle of coordinates be $C_{i+1}$. 
We call $IN$ the set of directions $D_i^+$ and $OUT$ the set of directions $D_{i+1}$.
At every inductive step the tie-breaking rule will be formed such that the directions from $IN$ have priority over the ones from $OUT$.
Thus, directions $D_k$ have priority over the ones from $D_{k'}$, if $k <k'$.

The starting vertex for $A_{i+1}$ is $v_0 = v_0^{i+1}=\{c_0^2, \ldots, c^2_{i+1}\}$. 
Similarly to the previous sections, 
the first part of the construction is interpreted as an adversary argument.
To construct $A'_{i+1}$ (not the final construction)
we take $2^6 = 64$ copies of $A_i$ and use three different 6-AUSO as connecting frames, utilizing Lemma~\ref{lem:product}. 
The frames are given in Figures~\ref{fig:zadeh_frames} and~\ref{fig:zadehF3}.
For each vertex on the path of the algorithm we choose the connecting frame according to the following rule: 
\begin{enumerate}[(1)]
\item We connect with $F_1$ vertices that are not $IN$-saturated;
\item We connect with $F_2$ vertices that are $IN$-saturated;
\item For the sink $s_{i}$ of $A_i$ we use $F_3$.
\end{enumerate}
Note that the connecting frame is not important for vertices that are not on the path of the algorithm. For the sake of completeness
let us connect these vertices using $F_3$ as the connecting frame.
The latter is a 6-AUSO that has the same path $\rbox{1} \rightsquigarrow \rbox{12}$ as $F_1$, has its sink at \rbox{12} 
and all other edges are forward. It will is described pictorially in Figure~\ref{fig:zadehF3}.

The result of this operation is $A'_{i+1}$. The latter is acyclic since it arises by 
applying only Lemma~\ref{lem:product}. To obtain our final construction it 
remains to perform one reorientation. 
Namely, the balance-AUSO 
will be embedded in the face $\fbox{B}= F(C_i^+, \{c^1_{i+1},c^2_{i+1},c^3_{i+1}\} )$,
shown in Figure~\ref{fig:zadeh_frames}.
The balance-AUSO is the uniform $6(i+1)$-AUSO which has its sink at the vertex $v_0^i$.
Formally, the outmap of vertex $v = v_0 \bot \fbox{B}$ is such that $s(v)\cap C_i^+ = \emptyset$.
The result of that reorientation will be the final AUSO $A_{i+1}$. 
Firstly, 
let us argue that $A_{i+1}$ is acyclic.

\begin{lemma} 
$A_{i+1}$ is acyclic. 
\end{lemma}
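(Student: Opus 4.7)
The plan is to follow exactly the three-step template used in the acyclicity argument at the end of the Cunningham warm-up (Section~\ref{sec:cunn}). First, I would check that the intermediate orientation $A'_{i+1}$ is acyclic. By construction, $A'_{i+1}$ is obtained via Lemma~\ref{lem:product} from the acyclic $A_i$ (acyclic by the inductive hypothesis) and the three connecting frames $F_1, F_2, F_3$, all of which are acyclic $6$-AUSO. The Product lemma explicitly preserves acyclicity, so $A'_{i+1}$ is acyclic.

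Second, I would verify that the reorientation of $\fbox{B}=F(C_i^+,\{c^1_{i+1},c^2_{i+1},c^3_{i+1}\})$ to the balance-AUSO is legal under Lemma~\ref{lem:reorientation}. This requires that for any two vertices $v,w\in\fbox{B}$ we have $s(v)\cap C_{i+1}=s(w)\cap C_{i+1}$. This holds because the vertex position corresponding to \fbox{B} has the same outmap on $C_{i+1}$ in each of the three frames $F_1$, $F_2$, $F_3$ (read directly from Figures~\ref{fig:zadeh_frames} and~\ref{fig:zadehF3}), so the $C_{i+1}$-outmap at every vertex of \fbox{B} is the same regardless of which frame was assigned during the adversary argument. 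Moreover, the balance-AUSO is uniform, hence itself acyclic; and the reorientation only alters edges on coordinates in $C_i^+$ that lie inside \fbox{B}.

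Third, I would rule out cycles in $A_{i+1}$. Suppose a directed cycle $\gamma$ exists. If $\gamma$ is contained in \fbox{B}, it contradicts acyclicity of the balance-AUSO. If $\gamma$ avoids \fbox{B} entirely, then every edge of $\gamma$ is also an edge of $A'_{i+1}$ (since the reorientation touches only edges inside \fbox{B}), contradicting acyclicity of $A'_{i+1}$. Hence $\gamma$ must exit \fbox{B} at least once, necessarily across an edge whose coordinate lies in $C_{i+1}$. By step two, the $C_{i+1}$-outgoing edges of \fbox{B} are the same as in $A'_{i+1}$, and they are few and explicit (readable from the frames).

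The main obstacle will be finishing the third step: one must check, by inspection of the frames, that from every such out-neighbor of \fbox{B} there is no directed path in $A'_{i+1}$ back into \fbox{B}. This is the analogue of the observation in the Cunningham proof that \fbox{B}'s only $C_{i+1}$-outgoing edge led to the hypersink \fbox{H}; here, because the Zadeh construction has no hypersink, one instead has to argue that the out-neighbors of \fbox{B} lie ``downstream'' in the $C_{i+1}$-orientation induced by the frames, so that no directed $C_{i+1}$-path can return to \fbox{B}. Once this trap property is confirmed directly from Figures~\ref{fig:zadeh_frames} and~\ref{fig:zadehF3}, the cycle $\gamma$ cannot close and $A_{i+1}$ is acyclic.
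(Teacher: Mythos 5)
Your plan matches the paper's proof step for step: argue that $A'_{i+1}$ is acyclic via the Product lemma, then rule out cycles through the reoriented face $\fbox{B}$ by noting that $\fbox{B}$ has only forward edges on $C_{i+1}$ and checking that nothing reachable from $\fbox{B}$ can come back. Your second step (verifying that Lemma~\ref{lem:reorientation} applies because all three frames agree on the $C_{i+1}$-outmap at the $\fbox{B}$ position) is a correct and welcome addition that the paper leaves implicit.

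The piece you explicitly defer as ``the main obstacle'' is precisely the substance of the paper's proof, and it is not hard once stated concretely. The paper observes that, since $\fbox{B}$ has only forward $C_{i+1}$-edges in every frame, the only faces reachable from $\fbox{B}$ that carry any backward $C_{i+1}$-edges are $S_1=\{\rbox{1},\ldots,\rbox{12}\}$; meanwhile the faces from which one can reach $\fbox{B}$ are $S_2=\{\fbox{1},\ldots,\fbox{12}\}\cup\{F(C_i^+,\emptyset),F(C_i^+,\{c_{i+1}^2,c_{i+1}^3\})\}$; and in none of $F_1$, $F_2$, $F_3$ is there a directed path from any face in $S_1$ to any face in $S_2$. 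That is exactly the ``downstream'' trap property you anticipate. So your route is the right one, but as written it stops short of the single verification that carries the proof; supplying the $S_1/S_2$ inspection closes it.
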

\begin{proof}
We already mentioned above that $A'_{i+1}$ is acyclic; it remains to argue that after the reorientation of \fbox{B}
doe not introduce any cycles. Note that \fbox{B} has only forward edges incident 
on the coordinates from $C_{i+1}$
in all three frames. From \fbox{B} there are paths to
vertices in faces from the set $S_1 = \{\rbox{1}, \ldots, \rbox{12}\}$; those are the only vertices reachable from $\fbox{B}$ that have 
incident backward edges on coordinates from $C_{i+1}$. However, in none of the frames there is path from any 
faces in set $S_1$ to some face from the set $S_2 = \{\fbox{1}, \ldots, \fbox{12}\} \cup \{F(C_i^+, \emptyset) \cup F(C_i^+, \{c_{i+1}^2, c_{i+1}^3\})\}$.
The set $S_2$ contains all the faces which have vertices with paths to \fbox{B} (in all three faces). Therefore, there is no cycle involving \fbox{B} and, then, 
$A_{i+1}$ is acyclic.
\end{proof}

Secondly, let us give the figure that describes the frames $F_1$ and $F_2$ (Figure~\ref{fig:zadeh_frames}).

\newpage

 \begin{figure}[htbp]
   \centering\includegraphics[width = \textwidth]{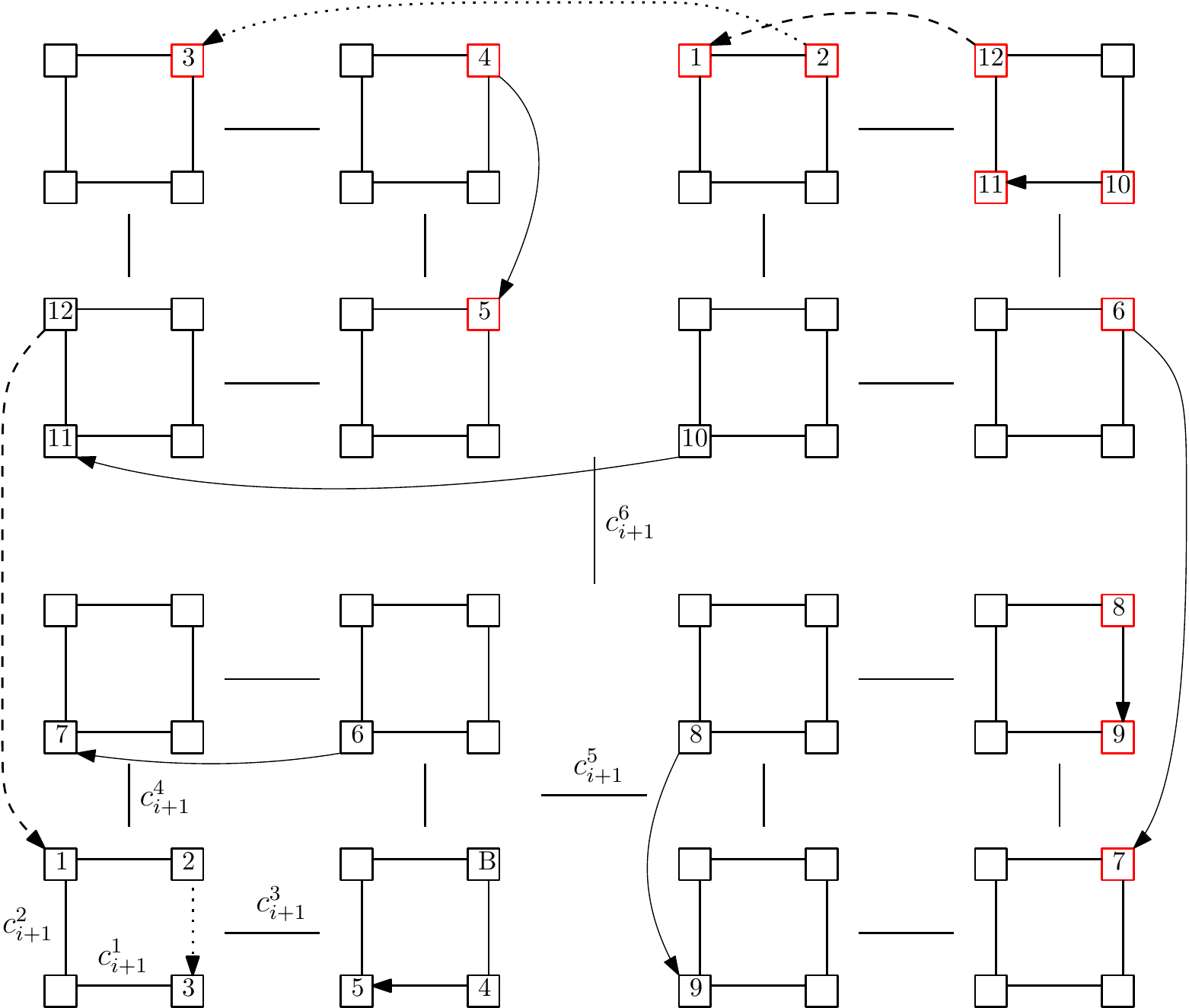}
 \caption{Both orientations $F_1$ and $F_2$ are given in this figure. For simplicity, only the backward edges 
 are explicitly oriented; every other edge is forward.
 The frame $F_1$ includes the \emph{dashed} backward edges but not the dotted ones;
 $F_2$ includes the \emph{dotted} backward edges but not the dashed ones. The solid backward edges are included in both frames.
 }
 \label{fig:zadeh_frames}
 \end{figure}
 
 In reference to Figure~\ref{fig:zadeh_frames}, let us present the intuitive idea of this lower bound: 
 The token will walk (in a projected way) along path $P_i$, while walking 
 between \fbox{1} and \fbox{12}. Then, it will go back to the start of $P_i$ in the balance-AUSO \fbox{B}. Then, it will walk the path 
  $P_i$ once again (in a projected way), while walking between \rbox{1} and \rbox{12}.
 
The frame $F_3$, which is used to connect the vertices that correspond to the sink $s_i$ of $A_i$, is presented in the 
next figure (Figure~\ref{fig:zadehF3}). 
Note that the sink of the frame is at \rbox{12}, which also places the global sink of $A_{i+1}$ in \rbox{12}.
The frame $F_3$ has two useful properties. The first 
is that it has only forward edges attached at \fbox{1}. This will take the token from there to \fbox{B} (w.r.t. $T_{i+1}$). The second is that 
there exists 
the same path $\rbox{1} \rightsquigarrow \rbox{12}$ as in $F_2$ (but the sink is at \rbox{12} for $F_3$). 

\newpage

\begin{figure}[htbp] 
 \centering\includegraphics[width = \textwidth]{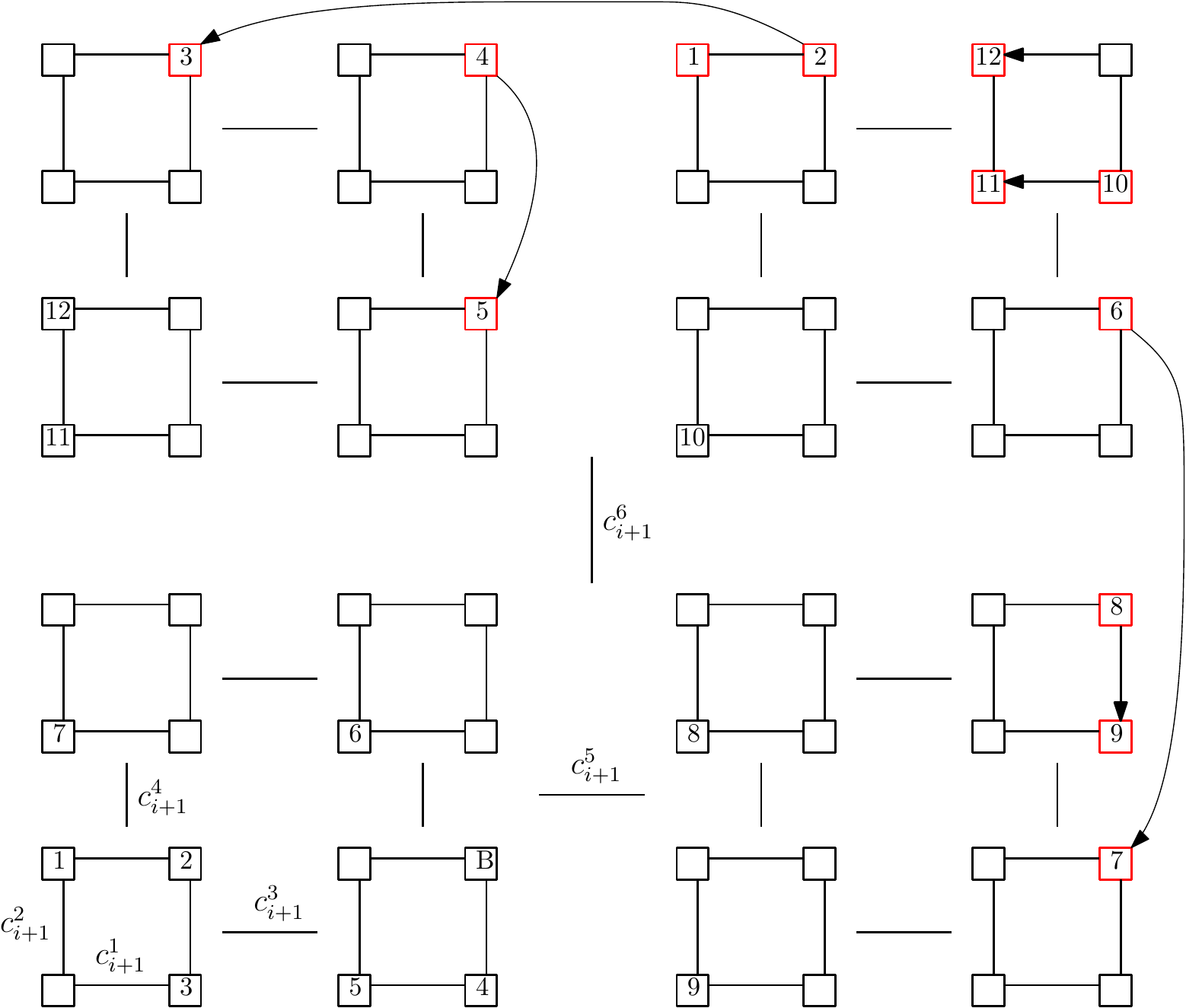}
\caption{The frame $F_3$. Again, only the backward edges are explicitly oriented.}
\label{fig:zadehF3}
\end{figure}

Following, we give three crucial properties that will hold for our construction. The first one is about the base case $A_0$ (refer to Figure~\ref{fig:zadehBC}).
\begin{enumerate}
\item[(i)] There is at least one  $(D_0)$-saturated vertex, other than the starting vertex $v_0^0$ and the sink $s_0$, in $A_0$. 
In addition, the sink $s_0$ of $A_0$ is at least two vertices away from the last vertex on the path $P_0$ that was $(D_0)$-saturated. 
\end{enumerate}
Property (i) will be utilized in the proof of Lemma~\ref{lem:nextnotinsat}.
The other two properties hold for every inductive step $A_k$ of the construction $0 \leq k \leq i+1$.
\begin{enumerate}
\item[(ii)] Let the token $t$ be at a vertex $v$ such that $v$ is $D_k^+$-saturated. Then the algorithm will use each direction from $D_k^+$
at most once and it will reach another $D_k^+$-saturated vertex.
\item[(iii)] When the token reaches the sink $s_k$ of $A_k$ there are exactly $4(k+1)$ negative imbalanced coordinates:
$IM_k = \{-c^3_0, -c^4_0,-c^5_0,-c^6_0, \ldots, -c^3_k, -c^4_k,-c^5_k,-c^6_k\}$. For every $d\in IM_k$ we have $b(d) = 1$ and for every other $d$ we have $b(d)=0$.
\end{enumerate}

In the next figure we give the 6-AUSO for the base case $A_0$ and argue that Properties (i), (ii) and (iii) hold for it. 
The arguments will appear in the caption of Figure~\ref{fig:zadehBC}.

\newpage

\begin{figure}[htbp] 
  \centering\includegraphics[width = \textwidth]{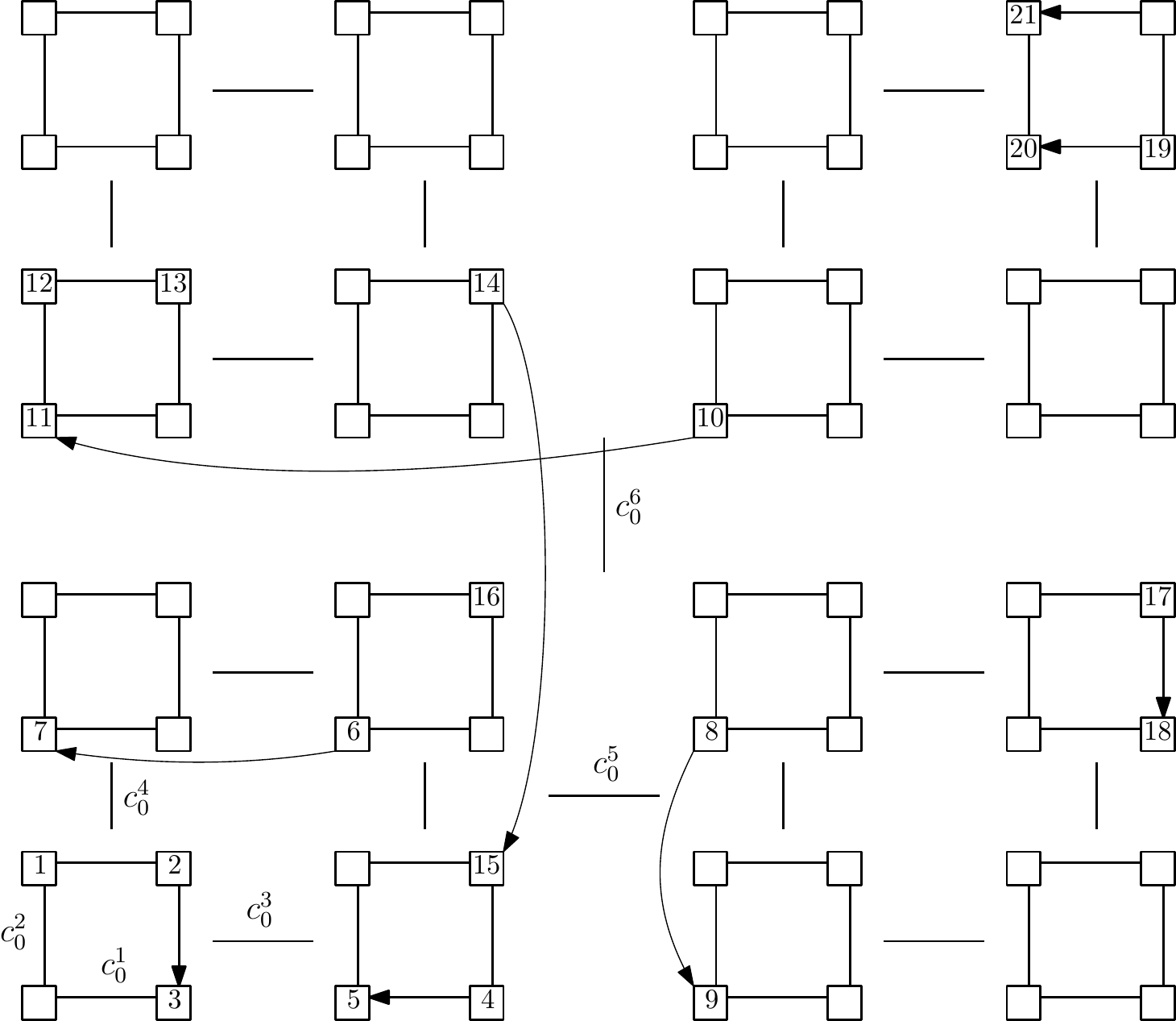}
\caption{The base case $A_0$. Only the backward edges are explicitly drawn.
The tie-breaking rule $T_0$ is essentially the same as for the inductive step. Specifically, 
$T_0 = (+1, -2, +3, -1, +4, -3, +5, -4, +6, -5, +2, -6)$. 
Here, a number $\pm k$ indicates the direction $\pm c_0^k$.
The numbers \fbox{$\cdot$} in the figure indicate the path that the token will take: 
$\fbox{1} \rightsquigarrow \fbox{21}$. Note that \fbox{12} is $D_0$-saturated while the sink is at \fbox{21} and, so, 
Property (i) is satisfied.
The initial vertex \fbox{1} is, of course, $D_0$-balanced and $D_0$-saturated. From there the algorithm uses
$|D_0|-1$ different directions to reach vertex \fbox{12} which is $D_0$-saturated. From the latter the algorithm uses 
$|D_0| -3$ different directions to reach vertex \fbox{21} which is the sink of $A_0$ and, thus, $D_0$-saturated.
Therefore, Property (ii) is satisfied for the base case. 
After the sink is reached at \fbox{21}, we have $b(-3)=b(-4)=b(-5)= b(-6)=1$ and $b(d) = 0$ for all other $d \in D_0$.
This satisfies Property (iii) for the base case.}
\label{fig:zadehBC}
\end{figure}

In the step-by-step analysis, which comes later, we will argue that they also hold for every inductive step of the 
construction. Also note, with regards to Property (iii), that if the token takes the directions in $IM_k$ from the sink $s_k$, it will go back to the starting vertex 
$v_0^k$. Such a path is not available in any of the connecting frames; however, such a path 
is available in the balance-AUSO of $A_{k+1}$.
We will now analyze the behavior of Zadeh's rule on AUSO $A_{i+1}$. Firstly, let us define the tie-breaking ordered list 
$T_{i+1}$: 
\[T_{i+1} = T_i \cdot (+1, -2, +3, -1, +4, -3, +5, -4, +6, -5, +2, -6).\]
Here, a number $\pm k$ indicates the direction $\pm c_{i+1}^k$.
Furthermore, the next lemma states that if the token is at 
a $D_{i+1}^+$-saturated vertex that does not have all edge on $C_i^+$ incoming, then there is an outgoing edge from $D_0$ to 
a neighboring vertex such that the latter is not IN-saturated. 

\begin{lemma} \label{lem:nextnotinsat}
Let token $t$ be at a $D_{i+1}^+$-saturated vertex $v$, such that $s(v) \cap C_i^+ \neq \emptyset$.
Then the next step of the token will be $v \xrightarrow{d} v'$ such that $d \in D_0$. Moreover, $v'$ 
is not $IN$-saturated and $s(v') \cap C_i^+ \neq \emptyset$.
\end{lemma}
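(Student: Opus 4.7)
The plan is to reduce the algorithm's next move to a pure tie-breaking decision and then to ensure tie-breaking picks something in $D_0$. First I would note that because $D_0\subseteq D_{i+1}^+$, any direction imbalanced in $D_0$ is also imbalanced in $D_{i+1}^+$; consequently $v$ is also $D_0$-saturated, and every available direction at $v$ has history exactly $h(d_{\max})$. So the algorithm's choice is fully determined by $T_{i+1}$, and since the recursive definition of $T_{i+1}$ places every direction of $D_0$ strictly before any direction of $D_1\cup\cdots\cup D_{i+1}$, it suffices to exhibit one available direction in $D_0$ at $v$ to conclude $d\in D_0$.

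Producing that available $D_0$-direction is the main obstacle. I would argue by contradiction: assume $s(v)\cap C_0=\emptyset$. The product decomposition of Lemma~\ref{lem:product} then forces $v\cap C_0$ to be the sink of the $6$-face $F(C_0,\,v\cap(C_i^+\setminus C_0))$ of $A_i$, hence in particular to coincide with $s_0$. I would combine Property~(i) (which pins down where $D_0$-saturated vertices can sit along $P_0$) with Property~(iii) applied inductively up to level $i$ to trace the history bookkeeping so far, and conclude that no vertex reachable by the algorithm that is $D_{i+1}^+$-saturated and has $s(v)\cap C_i^+\neq\emptyset$ can also satisfy $v\cap C_0=s_0$. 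This is where almost all of the work sits.

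Granted the first part, I would analyse $v'=v\oplus\{|d|\}$ directly. Using $d$ makes $h(d)=h(d_{\max})+1$ the new unique maximum, so every other direction, and in particular every element of $IN\setminus\{\pm d\}$, becomes imbalanced by exactly one. Hence it suffices to exhibit a single outgoing edge at $v'$ on a coordinate in $C_i^+\setminus\{|d|\}$: such an edge gives $s(v')\cap C_i^+\neq\emptyset$ and an available imbalanced direction in $IN$, showing $v'$ is not $IN$-saturated. Because $|d|\in C_0\subseteq C_i^+$, the step $v\to v'$ stays inside one copy of $A_i$, so the $C_i^+$-outgoing edges at $v'$ are governed by $\tilde{s}(v'\cap C_i^+)$. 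The remaining subtlety is to rule out $v'\cap C_i^+=s_i$ (in which case $\tilde{s}(v'\cap C_i^+)=\emptyset$): I would use that a $D_{i+1}^+$-saturated $v$ that is just one $C_0$-step from $s_i$ forces, via Property~(iii), a history configuration incompatible with both the saturation of $v$ and the availability of $d$.
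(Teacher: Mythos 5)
Your high-level structure matches the paper's: reduce the first conclusion to showing $s(v)\cap C_0\neq\emptyset$ (after observing that saturation forces tie-breaking, which favors $D_0$), then use the imbalance created by taking $d$ together with an available $C_0$-edge at $v'$ to get the second conclusion. You correctly isolated the two nontrivial facts. However, you have not actually established either of them: you explicitly defer the main obstacle (``this is where almost all of the work sits'') and only sketch the argument for ruling out $v'\cap C_i^+=s_i$. Since the lemma essentially \emph{is} these two facts, the proposal is a correct plan but not a proof. There is also a flawed intermediate step: ``forces $v\cap C_0$ to be the sink of the $6$-face $F(C_0,\cdot)$, hence to coincide with $s_0$.'' The first half is just the definition of $s(v)\cap C_0=\emptyset$; the second half silently assumes the $C_0$-face containing $v$ is a copy of $A_0$, which fails for vertices inside a balance-AUSO (there the $C_0$-face is uniformly oriented with sink $\{c_0^2\}\neq s_0$). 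You would first need to argue that a $D_{i+1}^+$-saturated vertex on the token's path cannot lie inside a balance-AUSO.

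For comparison, the paper's proof fills the gap as follows: if $s(v)\cap C_0=\emptyset$ at a vertex on the token's path, then (Property~(i) together with the Property~(iii) bookkeeping) the directions in $IM_0$ are imbalanced at that moment, and by construction the token would then be in, or routed into, a balance-AUSO where those $IM_0$ directions are available, contradicting $D_{i+1}^+$-saturation. For the last claim, the paper re-uses Property~(i) directly: since $v$ is $D_0$-saturated and $v'$ is exactly one $C_0$-step away, and $s_0$ is at least two steps past the last $D_0$-saturated vertex on $P_0$, one gets the stronger statement $s(v')\cap C_0\neq\emptyset$, which subsumes your $s(v')\cap C_i^+\neq\emptyset$ without needing a separate argument about $s_i$ or another invocation of Property~(iii). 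Completing your proposal would require supplying (a) the no-balance-AUSO argument, and (b) the Property~(i)/(iii) accounting that excludes $s(v)\cap C_0=\emptyset$, after which the cleaner Property~(i)-again route for $v'$ is preferable to your $s_i$ detour.
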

\begin{proof}
Since vertex $v$ is $D_{i+1}^+$-saturated the next direction that will be chosen by the algorithm will be from the smallest index bundle 
that has an available direction at $v$. Note that it is actually not possible that
$s(v) \cap C_0 = \emptyset$, $v$ is $D_{i+1}^+$-saturated and $s(v) \cap C_i^+ \neq \emptyset$ at the same time: if $v$ was such that $s(v) \cap C_0 = \emptyset$ then, by 
Property (i), the directions from $IM_0$ would not be balanced. By design the token would go to a balance-AUSO where the directions from $|IM_0|$ are available.
We conclude that 
$s(v) \cap C_0 \neq \emptyset$. 

Then, the next direction chosen by the algorithm at $v$ will be $d \in D_0$, $v \xrightarrow{d} v'$. After this move we will have that 
$b(d') > 0$ for every $d' \in D_{i+1}^+$, $d' \neq d$. Thus, $v'$ will not be $IN$-saturated. In addition, 
by Property (i), $v'$ will be such that $s(v) \cap C_0 \neq \emptyset$. 
\end{proof}


Moreover, we one last notation that will help us with the upcoming step-by-step analysis:
Let us denote with $d^{OUT}_{max}$ the direction that maximizes history over the $OUT$ directions;
similarly, we define  $d^{IN}_{max}$.

\subparagraph{Step-by-step analysis.} We are ready to give a description for the behavior of the algorithm on $A_{i+1}$. 
The token is initially at the starting vertex $v_0 = v_0^{i+1} = \{c_0^2, \ldots, c_{i+1}^2\}$. In the rest we write 
$\pm k$ to mean $c_{i+1}^k$. 

Note that in the analysis below we will also argue that Properties (ii) and (iii) are inductively satisfied. For the latter a very simple 
accounting of the imbalanced coordinates at the end of the run of the algorithm is enough; this will be provided at Step (10).
For the former we basically have to show that when the token is at a $D_{i+1}^+$-saturated vertex then it will go through 
at most $|D_{i+1}^+|-1$ many different directions to reach another $D_{i+1}^+$-saturated vertex. For the directions from $IN$
the inductive hypothesis will be employed. For the directions from $OUT$ we will explicitly do the accounting. 
This will be found at Steps (2), (5), (8) and (10) of the analysis.

\begin{enumerate}[(1)]
\item 
The token is at \fbox{1}. Directions from $IN$ will be used, since they have priority in $T_{i+1}$. 
After some steps, an $IN$-saturated vertex $v_s$ will be reached, by Property (ii) which holds for $A_i$ by the inductive hypothesis.
At $v_s$, we have that $b(d^{OUT}_{max}) = 1$. The connecting frame will be $F_2$. 
The directions from $OUT$ will be utilized and used in the order defined by $T_{i+1}$.
The token will take a path $\fbox{1} \rightsquigarrow \fbox{12}$, where it will reach $v_s \bot \fbox{12}$.

\item When the token reaches \fbox{12} we have: 
$b(-6)=1$ and for every other direction $d\in OUT$, $b(d)=0$; also, $b(d^{OUT}_{max}) =0$.
The frame is $F_2$ and the dashed edge is not available: $v_s \bot \fbox{12} \leftarrow v_s \bot \fbox{1}$.
This means that $v_s \bot \fbox{12}$ is $D_{i+1}^+$-saturated (it is $OUT$-saturated and $IN$-saturated). 
We conclude that Property (ii) also holds for the new inductive step  up to this point 
(the algorithm has used each direction from OUT at most once since the previous $D_{i+1}^+$-saturated vertex). 

Then, the algorithm will use one direction from $IN$. The next vertex $v$ is not $IN$-saturated, by Lemma~\ref{lem:nextnotinsat}. Thus, the connecting frame 
for $v$ is $F_1$. Direction $-6$ (which is the only direction that has $b(-6)=2$ at this point) 
will be used and the token will go to \fbox{1}. Afterwards, $b(OUT,d)=0$, for every $d \in OUT$, 
$b(d^{OUT}_{max}) =1$ and $b(d^{IN}_{max}) =0$ (because some direction from $IN$ has been used once more). 

\item The algorithm will keep looping between steps (1)-(2) until the token reaches 
a vertex $v_{s_i}$ such that $s(v_{s_i}) \cap C_i^+ = \emptyset$.
That is a vertex that corresponds to the sink of $A_i$. This will be evaluated in \fbox{1}, by Lemma~\ref{lem:nextnotinsat}.
At that vertex, we have that $b(d) = 1$, for every $d \in IM_i$, by Property (iii) which holds inductively.

\item The connecting frame for $v_{s_i}$ is $F_3$. No direction from $IN$ is available and, thus, the algorithm will choose +1, first, and then +3
(only positive directions are available). The token is now at vertex $v_{s_i}\bot \fbox{B}$.
In \fbox{$B$} we have a uniform orientation with the sink at vertex $v_0 \bot \fbox{B}$.
The token will move over all the imbalanced directions from $IM_i$, towards the sink of \fbox{$B$}, in exactly $|IM_i|$ steps.
After this, $b(d)=0$ for every $d\in IN$ and the token is at $v_0 \bot \fbox{B}$. 
The connecting frame is $F_2$ (since $v_0 \bot \fbox{B}$ is $IN$-saturated because $IN$ is balanced).
For the directions from $OUT$ we have that $b(+1)=b(+3)=0$ and $b(d) = 1$ for every other $d\in OUT$.

\item  In the next step, the algorithm will choose +4 and then +5 and the token will go to vertex $v_0 \bot \rbox{8}$. From there on, backwards edges 
will become again available.
The token takes a path $\rbox{8} \rightsquigarrow \rbox{12}$, according to $T_{i+1}$, where it ends up at vertex $v_0 \bot \rbox{12}$.
The connecting frame is $F_2$ and the dashed edge is not available: $v_0\bot \rbox{12} \leftarrow v_0\bot \rbox{1}$.
This means that $v_0\bot \rbox{12}$ is also $OUT$-saturated and, thus, $D_{i+1}^+$-saturated. More specifically, we have 
$b(-3)=b(-4)=b(-5)=b(-6)=1$ and for every other $d\in OUT$, $b(d) = 0$. The algorithm has used strictly less than
$|OUT|$ directions from $OUT$, each one different to the other, and Property (ii) is satisfied up to this point.

\item[$\blacktriangleright$] The token is currently at vertex $v = v_0 \bot \rbox{12}$, which is such that $v \cap C_i^+ = v_0^i$.
Moreover, $IN$ is balanced.
Since the algorithm is deterministic, all the steps that the token will take 
using directions from $IN$ will be consistent with the path $P_i$. This is because using directions 
from $OUT$ according to the tie-breaking rule $T_{i+1}$ it will always be in a face from the set $\{\rbox{1}, \ldots \rbox{12}\}$ and 
all these faces are oriented according to $A_i$.

\item In the next step, the algorithm will use one direction $d\in IN$, $v_0 \bot \rbox{12} \xrightarrow{d} v$.
The next vertex $v$ is not $IN$-saturated, by Lemma~\ref{lem:nextnotinsat} and, thus, the connecting frame 
is $F_1$. The algorithm will use direction $-3$ and the token will go to \rbox{1}. So, now 
$b(-4)=b(-5)=b(-6)=1$ and for every other $d\in OUT$, $b(d) = 0$; so, $b(d^{OUT}_{max}) = 0$. 

\item The token is at $\rbox{1}$. Directions from $IN$ will be used, since they have priority in
$T_{i+1}$. 
From $OUT$ only negative directions are imbalanced but no negative direction is available at $\rbox{1}$.
After some steps, an $IN$-saturated vertex will be reached, by Property (ii) which holds for $A_i$ by the inductive hypothesis.
Let us call this vertex $v_s$.
When this happens, $b(d^{OUT}_{max}) =1$ and $b(OUT, -4)= b(OUT, -5) = b(OUT,-6)=1$.
The directions from $OUT$ will be utilized and used in the order defined by $T_{i+1}$. 
The token will take a path $\rbox{1} \rightsquigarrow \rbox{12}$. 
This is possible because the connecting frame for $v_s$ is $F_2$ and the dotted edges are available.

\item When the token reaches \rbox{12} we have $b(-3)=b(-4)=b(-5)=b(-6)=1$ and for every other $d\in OUT$, $b(d) = 0$. 
The frame is still $F_2$ and the dashed edge is not available: $v_s\bot \rbox{12} \leftarrow v_s\bot \rbox{1}$.
This means that $v_s \bot \rbox{12}$ is also $OUT$-saturated and, thus, it is $D_{i+1}^+$-saturated. 
We conclude that up to this point Property (ii) also holds for the new inductive step
(the algorithm has used each direction from OUT exactly once since the previous $D_{i+1}^+$-saturated vertex). 

Then, the algorithm will use one direction from $IN$. 
The next vertex $v$ is not $IN$-saturated, by Lemma~\ref{lem:nextnotinsat}. 
Thus, the connecting frame for $v$
is $F_1$. Direction $-3$ will be used and the token will go to \rbox{1}; $b(d^{OUT}_{max}) = 1$. 

\item The algorithm will keep looping between steps (7)-(8) until the token reaches 
a vertex $v_{s_i}$ such that $s(v_{s_i}) \cap C_i^+ = \emptyset$. 
That is a vertex that corresponds to the sink of $A_i$. It will be evaluated in \rbox{1}, by Lemma~\ref{lem:nextnotinsat}.

\item The token is on vertex $v_{s_i} \bot \rbox{1}$; hence, the connecting frame is $F_3$. 
By the same arguments as for step (7), when the token reaches $v_{s_i}$ we have 
$b(d^{OUT}_{max}) =1$ and $b(OUT, -4)= b(OUT, -5) = b(OUT,-6)=1$. The token will take a path 
$v_{s_i} \bot \rbox{1} \rightsquigarrow v_{s_i} \bot\rbox{12}$. The global sink will be at 
$v_{s_i} \bot \rbox{12}$. 
After the sink has been reached the balance for $OUT$ will be the same as the beginning of step (8) above. Namely, 
$b(-3)=b(-4)=b(-5)=b(-6)=1$ and for every other $d\in OUT$, $b(d) = 0$.  Thus, Property (iii) will be satisfied inductively.
Finally, the global sink is $D_{i+1}^+$-saturated and the algorithm has used 
each direction from $OUT$ exactly once since the previous $D_{i+1}^+$-saturated vertex.
Therefore, Property (ii) is satisfied. 
\end{enumerate}
With the above analysis, we have proved that the path $P_{i+1}$ will have length that is larger than twice the length of path $P_i$.
Therefore, we obtain the recursion $T(n+6) > 2T(n)$ which leads to the proof of Theorem~\ref{thm:zadeh}.

\section{Conclusions}
In this paper, we have constructed AUSO on which the three pivot rules of interest will take exponentially long paths. 
Several interesting problems remain open: 
First and foremost is settling if Zadeh's and Johnson's rules admit exponential lower bounds even on linear programs. 
Moreover, it remains open to decide if Zadeh's rule admits 
Hamiltonian paths on AUSO, a direction suggested by Aoshima \etal \cite{aoshima}.
Finally, we are interested in exponential lower bounds for all the history-based rules that are discussed in \cite{aoshima}. 
We believe that our methods can be used to prove exponential lower bounds on AUSO for all of these rules. 

\bibliography{uso}

\end{document}